\documentclass[10pt,journal,compsoc]{IEEEtran}

\usepackage[dvipsnames,table]{xcolor}
\usepackage{colortbl}

\usepackage{amsmath,amssymb,amsfonts,mathtools}
\usepackage{amsthm}

\usepackage{graphicx}
\usepackage{booktabs}
\usepackage{multirow}
\usepackage{multicol}
\usepackage{makecell}
\usepackage{array}
\usepackage{tabularx}
\usepackage{arydshln}
\usepackage{adjustbox}
\usepackage{textcomp}

\usepackage[linesnumbered,ruled,vlined]{algorithm2e}
\SetKwInput{KwInit}{Init}

\usepackage{float}
\usepackage{placeins}
\usepackage{stfloats}

\usepackage{subfig}

\usepackage[most]{tcolorbox}
\usepackage{mdframed}

\usepackage{url}
\usepackage{cite}
\usepackage{listings}
\usepackage{verbatim}

\usepackage[colorlinks=true,
    linkcolor=MidnightBlue,
    citecolor=ForestGreen,
    urlcolor=BrickRed
]{hyperref}

\newtheorem{theorem}{Theorem}

\theoremstyle{plain}

\definecolor{titlegray}{HTML}{555555}
\definecolor{framegray}{gray}{0.6}
\definecolor{lightgray}{gray}{0.97}
\newtcolorbox{promptbox}[1][]{
    enhanced,
    colback=lightgray,
    colframe=framegray,
    coltitle=white,
    colbacktitle=titlegray,
    fonttitle=\bfseries\itshape\normalsize, 
    title=#1,
    attach boxed title to top left={xshift=0.6cm,yshift=-3mm},
    boxed title style={
        boxrule=0.8pt,
        colframe=titlegray,
        top=3pt,
        bottom=3pt,
        left=8pt,
        right=8pt,
        rounded corners=south,
    },
    width=0.45\textwidth,
    before skip=6pt,
    after skip=12pt,
    boxrule=0.8pt,
    sharp corners=south,
    rounded corners,
    top=8pt,
    bottom=6pt,
    left=8pt,
    right=8pt,
    fontupper=\normalsize,
    coltext=black,
}

\lstdefinelanguage{diff}{
  morecomment=[f][\color{gray}]{@@},
  morecomment=[f][\color{gray}]{---},
  morecomment=[f][\color{gray}]{+++},
  morecomment=[f][\color{red}]{-},
  morecomment=[f][\color{green!60!black}]{+},
}

\lstdefinestyle{diff}{
  language=diff,
  basicstyle=\ttfamily\footnotesize,
  columns=fullflexible,
  keepspaces=true,
  showstringspaces=false,
  breaklines=true
}

\hyphenation{op-tical net-works semi-conduc-tor IEEE-Xplore}

\begin{document}
\renewcommand{\thesection}{\arabic{section}}
\renewcommand{\thesubsection}{\thesection.\arabic{subsection}}
\renewcommand{\thesubsubsection}{\thesubsection.\arabic{subsubsection}}

\title{From Illusion to Insight: Change-Aware File-Level Software Defect Prediction Using Agentic AI}  

\author{
    \IEEEauthorblockN{Mohsen Hesamolhokama$^1$\IEEEauthorrefmark{2,1}\IEEEauthorrefmark{1,1}, Behnam Rohani$^2$\IEEEauthorrefmark{1,1}, Amirahmad Shafiee$^2$, MohammadAmin Fazli$^1$, Jafar Habibi$^1$}\\
    \IEEEauthorblockA{$^1$Department of Computer Engineering, Sharif University of Technology, Tehran, Iran} \\
    \IEEEauthorblockA{$^2$Department of Mathematical Sciences, Sharif University of Technology, Tehran, Iran}\\
    {Emails: hokama@ce.sharif.edu, \{behnam.rohani058 , amirahmad.shafiee, fazli, jhabibi\}@sharif.edu}
    \thanks{*These authors contributed equally to this work.}
    \thanks{\dag Corresponding author (email: hokama@ce.sharif.edu).}
}
\maketitle

\begin{abstract}
Much of the reported progress in file-level software defect prediction (SDP) is, in reality, nothing but an \textit{illusion of accuracy}.
Over the last decades, machine learning and deep learning models have reported increasing performance across software versions.
However, since most files persist across releases and retain their 
defect labels, standard evaluation rewards \textit{label-persistence bias} rather than reasoning about code changes. To address this issue, we reformulate SDP as a \textit{change-aware prediction task}, in which models reason over code changes of a file within successive project versions, rather than relying on static file snapshots. Building on this formulation, we propose an LLM-driven, change-aware, multi-agent debate framework. Our experiments on multiple PROMISE projects show that traditional models achieve inflated F1, while failing on rare but critical defect-transition cases. In contrast, our change-aware reasoning and multi-agent debate framework yields more balanced performance across evolution subsets and significantly improves sensitivity to defect introductions. 
These results highlight fundamental flaws in current SDP evaluation practices and emphasize the need for change-aware reasoning in practical defect prediction. 
The source code is publicly available.\footnote{\url{https://github.com/mhhokama/from-illusion-to-insight}}

\end{abstract}
\maketitle

\section{Introduction}
\label{sec:sample1}
As software systems become central to essential services in modern society, their reliability is more important than ever. These systems support major infrastructures such as healthcare, finance, transportation, and communication. However, the growing complexity of software and the constant evolution of codebases introduce defects that may cause serious failures, financial losses, or safety risks. Studies estimate that software quality assurance can account for up to 50\% of total development cost, highlighting the need for early and accurate defect detection \cite{Li2018, Hall2012, Kitchenham2017}. As a result, the research community has invested significant effort in Software Defect Prediction (SDP), which aims to identify defect-prone modules before deployment to reduce maintenance costs and post-release failures.

Traditional SDP approaches mostly rely on supervised models trained on historical labeled data. These include logistic regression, decision trees, and support vector machines \cite{Rahman2013, Tantithamthavorn2018, Hosseini2017}. Although their performance is mistakenly reported as effective in prior works, the predictions often create an \textit{illusion of accuracy}. This occurs because overlapping files inflate performance metrics due to label-persistence bias. Consequently, models memorize historical labels instead of reasoning about code changes.

Despite the availability of widely used benchmark datasets such as PROMISE\cite{promise_awsm}, NASA\cite{nasa_klainfo} MDP, and AEEEM\cite{aeeem_kaggle}, current SDP models still encounter fundamental challenges when applied to evolving software projects\cite{nevendra2022survey}. 
These datasets were constructed for static cross-version validation, in which models are trained on one version and tested on another within the same project. As source code is textual data, it requires numerical features for model training. Recent advances in deep learning have enabled models to learn semantic representations directly from raw code and commit data. CNN, RNN, and Transformer-based architectures outperform metric-driven models \cite{zhou2024research, liu2025improved, Majd2021}. Models such as ASTNN \cite{Zhou2019} and DeepJIT \cite{Hoang2020} capture program syntax and commit-level semantics. Graph-based methods \cite{Dam2019, Adediran2024} also incorporate structural dependencies to improve defect reasoning. Pretrained code-focused embeddings, including CodeBERT \cite{Feng2020}, CodeT5+ \cite{Wang2023}, XLNet \cite{yang2019xlnet}, and StarCoder2 \cite{lozhkov2024starcoder}, learn jointly from source code and natural language. 

However, these conventional approaches, along with their evaluation, are inherently flawed since they don't take into account label-persistence bias in overlapping files.
Modern LLMs somewhat address this issue through in-context reasoning \cite{wei2022chain} since they don't necessarily need to be fine-tuned on previous historical data. They synthesize code semantics and descriptions to estimate defect likelihoods \cite{nam2024using}. Despite this progress, most deep learning and LLM-based methods still treat software versions as static snapshots. They rely on metrics or fixed embeddings that ignore the temporal and causal dynamics of code evolution. Consequently, they struggle to distinguish harmless refactorings from defect-inducing modifications \cite{clinton2025proactive, deeptha2024robust}. Traditional one-step classification pipelines further limit interpretability because they predict labels without explaining \textit{why} a change may introduce risk. 
These challenges point to the need for a \emph{change-aware and evolution-sensitive} approach, which avoids the aforementioned \textit{illusion} by grounding prediction in the differences introduced between versions~\cite{fakih2025llm4cve, yu2025graphrag, zeng2021deep}. 

Our proposed approach transforms software defect prediction from a single-file classification task into a change-aware task. The objective is to predict changes in defect status based on modifications in the source code, requiring knowledge of each file’s previous status. Unlike traditional methods, our approach emphasizes how code changes introduce or resolve defects across versions. Inspired by recent advances in multi-agent debate frameworks for problem-solving and software issue resolution \cite{li2025swe, tillmann2025literature},~\cite{vamosi2025crawdad} we extend beyond a single LLM by employing a competitive multi-agent debate structure. 
Our framework consists of four agents, each with a distinct role. At the beginning, an \emph{Analyzer} examines the code changes using \emph{dual-sided reasoning}. It 
simultaneously considers why a file may become defective and why it may remain or become benign after the change. Based on this analysis, a \emph{Proposer} formulates hypotheses about possible defect introductions or resolutions. These hypotheses are then challenged by an opposing \emph{Skeptic}. At the end, a \emph{Judge} assesses the competing arguments and produces the final prediction. As demonstrated in our experiments, this structured multi-agent reasoning improves performance on rare but critical defect-transition cases that are systematically overlooked by traditional and single-agent SDP approaches.

\subsection{Motivation Example}
In software engineering, a \textit{defect} represents a mismatch between human intention and program behavior during execution. Consider a simple case where a function aims to compute the sum of all elements in an array. Although the implementation appears correct, an off-by-one condition in the loop causes an \texttt{ArrayIndexOutOfBoundsException} at runtime. After correcting the loop boundary, the function performs as intended. This example illustrates that a defect is not merely a syntactic mistake. Understanding and predicting such issues requires reasoning about program logic and execution.

\lstdefinestyle{javaColored}{
  language=Java,
  basicstyle=\ttfamily,  
  keywordstyle=\color{blue}\bfseries,
  stringstyle=\color{red},
  commentstyle=\color{red}\itshape,
  numbers=left,
  numberstyle=\tiny\color{gray},
  stepnumber=1,
  numbersep=5pt,
  tabsize=4,
  showspaces=false,
  showstringspaces=false,
  breaklines=true,
  frame=none,
  backgroundcolor=\color{white}
}
\vspace{10pt} 

\setcounter{figure}{0}
\begin{figure*}[ht]

    \begin{minipage}[t]{0.48\textwidth}

     \label{fig:defect}   
        
        \begin{adjustbox}{width=\textwidth}
            \begin{lstlisting}[style=javaColored]
public class DefectExample {
    public static int calculateSum(int[] arr) {
        int total = 0;
        // Defective loop condition
        for (int i = 0; i <= arr.length; i++) {
            total += arr[i];
        }
        return total;
    }

    public static void main(String[] args) {
        int[] numbers = {1, 2, 3, 4, 5};
        System.out.println(calculateSum(numbers));
    }
}
            \end{lstlisting}
        \end{adjustbox}
        \centering
        \textbf{(a) Defective Code}
    \end{minipage}
    \hfill
    \begin{minipage}[t]{0.48\textwidth}
        
       \label{fig:fixed} 
        \begin{adjustbox}{width=\textwidth}
            \begin{lstlisting}[style=javaColored]
public class FixedExample {
    public static int calculateSum(int[] arr) {
        int total = 0;
        // Corrected loop condition
        for (int i = 0; i < arr.length; i++) {
            total += arr[i];
        }
        return total;
    }

    public static void main(String[] args) {
        int[] numbers = {1, 2, 3, 4, 5};
        System.out.println(calculateSum(numbers));
    }
}
            \end{lstlisting}
        \end{adjustbox}
        \centering
        \textbf{(b) Corrected Code}
    \end{minipage}
    \centering
    \caption{Comparison of Defective and Corrected Java Code}
    \label{fig:compare}
\end{figure*}

\subsection{Contributions}
The main contributions of this paper are as follows: 

\begin{itemize}
    \item \textbf{Analysis of "Illusion of Accuracy" in SDP.}
    We expose an \emph{illusion of accuracy} in Tranditional SDP, showing that reported performance gains largely come from file overlap, label persistence, and \emph{wrong in being right} phenomenon.
    \item \textbf{A Change-aware Formulation and Evaluation in SDP.}
    We introduce a change-aware formulation of SDP that models file evolution through status-transition subsets. 
    This is used to evaluate LLMs under multiple change-aware reasoning methods.
    \item \textbf{A Multi-agent Debate Framework for Change-aware Defect Prediction.}
    We propose a change-aware, multi-agent debate framework that reasons directly over code changes. 
    This framework moves beyond the deceptive performance of traditional models, and takes the first steps toward a balanced performance on the four status-transition subsets. 
\end{itemize}

\subsection{Paper Outline}
The remainder of the paper is organized as follows. Section~\ref{sec:related_works} describes background and related work. Section~\ref{sec:methodology} presents our change-aware formulation, file-matching strategy, context-expansion algorithm, and multi-agent debate framework. Section \ref{sec:experimental_design} details the dataset, baselines, prompting strategies, and evaluation setup. Section \ref{sec:results} reports the findings and answers the research questions. Section~\ref{sec:discussion} provides theoretical justification of change-aware perspective, along with an in-depth comparative analysis of our proposed methodology. Section \ref{sec:validity} describes threats to validity. Section \ref{sec:conclusion} concludes the paper.

\section{‌Background And Related Work}
\label{sec:related_works}
\subsection{Software Defect Prediction}
Software Defect Prediction (SDP) has long been a central research topic in software engineering, aiming to identify defect-prone modules before failures lead to increased maintenance costs and reduced reliability. Early SDP studies predominantly relied on supervised machine learning algorithms such as Decision Trees, Support Vector Machines, and Logistic Regression \cite{lessmann2008benchmarking, hall2012systematic}. These approaches utilized static software metrics, including cyclomatic complexity, coupling, lines of code, and historical change information \cite{gray2010software}. However, these foundational, traditional models faced notable limitations; they struggled with high-dimensional feature spaces \cite{elish2008predicting}, were highly sensitive to severe class imbalance \cite{bennin2018mahakil}, and largely ignored deeper semantic properties of code, such as logical dependencies and control-flow relationships \cite{liu2024cfg2at}. Consequently, these models often overfit project-specific characteristics and exhibited limited generalization to unseen or evolving software systems.

As the limitations of traditional machine learning became evident, deep learning emerged as a transformative approach for SDP by enabling automatic feature learning directly from raw code artifacts. Deep architectures such as Convolutional Neural Networks (CNNs)~\cite{o2015introduction}, Recurrent Neural Networks (RNNs)~\cite{sherstinsky2020fundamentals}, and Transformer-based models can capture complex syntactic and semantic patterns without relying on handcrafted features. Wang et al. \cite{wang2018deep} introduced a deep semantic feature learning framework that integrates structural and contextual information, significantly outperforming conventional classifiers. Majd et al. \cite{majd2020sldeep} proposed SLDeep to model statement-level semantics, while Pornprasit and Tantithamthavorn \cite{pornprasit2023deeplinedp} extended deep learning to line-level defect prediction. While these approaches improved semantic awareness, they still depend heavily on large volumes of labeled training data, limiting their effectiveness in projects with scarce or incomplete defect annotations.

To further enhance predictive capability and interpretability, recent SDP research has increasingly emphasized semantic feature learning, recognizing that software quality depends not only on structural complexity but also on developer intent and program logic. Earlier syntactic metric–based models failed to adequately represent logical flow and semantic context. To address this gap, researchers have incorporated contextual information from comments, commits, and issue reports alongside source code. Liu et al. \cite{liu2023semantic} proposed a unified framework that jointly models source code and natural language artifacts to better capture developer intent. Xu et al. \cite{xu2020defect} demonstrated that graph neural networks operating on abstract syntax trees (ASTs) effectively encode both structural and semantic dependencies, while Huo et al. \cite{huo2018learning} showed that comment-based embeddings significantly enhance defect prediction accuracy. Collectively, these studies demonstrate that modeling the semantic interplay between code structure and developer reasoning leads to more robust, interpretable, and context-aware SDP models.

Despite these advances, many real-world projects suffer from insufficient labeled data, motivating extensive research on Cross-Project Defect Prediction (CPDP). CPDP aims to transfer knowledge from well-labeled source projects to target projects with limited or no defect labels~\cite{saeed2024cross}. Kamei et al. \cite{kamei2016just} demonstrated that cross-project JIT defect prediction is feasible when sufficient domain similarity exists. Bai et al. \cite{bai2022transfer} further proposed a multi-source transfer learning framework to improve generalization across heterogeneous projects. Nevertheless, CPDP remains challenging due to variations in coding standards, development practices, and project scale. To mitigate these issues, researchers have explored adaptive feature weighting \cite{tong2023array}, adversarial domain adaptation \cite{sheng2020adversarial}, and graph-based transfer learning \cite{xing2022cross}. However, effectively aligning feature distributions across disparate projects continues to be a major bottleneck.

In addition to data scarcity, class imbalance represents a fundamental challenge in SDP, as defective modules are vastly outnumbered by non-defective ones. This imbalance biases learning algorithms toward the majority class, often resulting in low defect recall~\cite{gong2019tackling}. To address this issue, both data-level and algorithm-level solutions have been proposed. Oversampling techniques such as SMOTE \cite{pak2018smote} generate synthetic minority instances, while Feng et al. \cite{feng2021coste} introduced Coste, a complexity-based oversampling method that prioritizes hard-to-classify modules. Cost-sensitive deep learning approaches further mitigate imbalance by penalizing false negatives more heavily. Yedida and Menzies \cite{yedida2021value} demonstrated that combining rebalancing strategies with ensemble deep models significantly improves robustness. However, oversampling methods still risk overfitting and artificial diversity, underscoring the need for more adaptive and semantically informed rebalancing techniques.

To improve robustness and stability, hybrid architectures and ensemble learning have been widely adopted in SDP. Tong et al. \cite{tong2018software} integrated stacked denoising autoencoders within an ensemble framework to enhance prediction consistency across software releases. Zhao et al. \cite{zhao2021metaheuristic} applied metaheuristic optimization to enable adaptive feature selection in deep neural networks, while Tong et al. \cite{tong2024master} proposed a multi-source weighted ensemble framework for cross-project prediction. In parallel, attention-based models have gained traction due to their ability to improve interpretability and focus learning on defect-relevant code regions. Yu et al. \cite{yu2021hierarchical} introduced DP-HNN, a hierarchical attention model operating over AST subtrees, and Zheng et al. \cite{zheng2021software} employed Transformer architectures to capture both local and global dependencies. These studies reflect a broader shift toward multi-perspective learning that integrates structural, semantic, and contextual information.

Building upon these directions, recent studies have further expanded deep learning for SDP through diverse architectures and learning paradigms. Pandey et al.~\cite{pandey2025deep} evaluated SqueezeNet and Bottleneck models on NASA datasets, achieving F-measures of up to 0.93, albeit with high computational costs. Hu et al.~\cite{hu2025fed} proposed Fed-OLF, a federated oversampling framework that improves F1, G-mean, and AUC while preserving data privacy. Han et al.~\cite{han2024bjcnet} introduced bjCnet, which leverages supervised contrastive learning with Transformer-based code models, achieving 0.948 accuracy and F1. Hesamolhokama et al.~\cite{hesamolhokama2024sdperl} combined ensemble feature extraction from pre-trained code models with reinforcement learning–based feature selection. Malhotra et al.~\cite{malhotra2025dhg} proposed DHG-BiGRU with dual attention over static and AST-based semantic features, while Nashaat et al.~\cite{nashaat2025refining} applied bidirectional Transformers with attention. Overall, these studies highlight a clear trend toward hybrid, attention-driven, and adaptive SDP models that jointly exploit semantic, structural, and contextual information.

\subsection{Source Code Representation Learning}
Effective software defect prediction depends critically on how source code is represented for learning models. Traditional metric-based features capture superficial code statistics but fail to model syntactic and semantic dependencies. Early representation learning approaches, such as \textit{code2vec} and \textit{code2seq}, introduced path-based embeddings from Abstract Syntax Trees (ASTs), enabling structural awareness and partial semantic understanding of source code \cite{alon2019code2vec, alon2018code2seq}. Building on this idea, \textit{ASTNN} \cite{zhang2019astnn} segmented large ASTs into statement-level subtrees to better capture local code patterns.
\par
With the rise of pre-trained language models, \textit{CodeBERT} \cite{feng2020codebert} and \textit{GraphCodeBERT} \cite{guo2021graphcodebert} revolutionized code representation by jointly encoding code tokens and data-flow dependencies. These models substantially improved downstream tasks such as defect prediction, code summarization, and vulnerability detection by incorporating both structural and contextual semantics. Recent works have continued this trend: \textit{TransformCode} \cite{xian2024transformcode} leveraged contrastive learning over transformed subtrees to enhance embedding robustness, while \textit{xASTNN} \cite{xu2023xastnn} refined AST segmentation to improve representation generalization in industrial settings.

More recent studies have emphasized hybrid representations that combine textual and graph-based information to capture the ``mixed polysemy'' nature of source code \cite{li2023contextuality, wang2023tree}. However, these models remain primarily static, focusing on fixed code snapshots without considering evolving files or semantic changes over time. This limitation is particularly critical for file-level SDP, where changes between software versions fundamentally alter defect behavior. Thus, while deep representation learning has advanced the expressiveness of code embeddings, there remains a need for \textbf{change-aware, semantics-grounded representations} that can reason about modified code and its defect potential in dynamic software environments.

\subsection{LLMs for Code}
The emergence of Large Language Models (LLMs) such as CodeBERT, CodeT5, and GPT-3/4 has opened new possibilities for software engineering research, particularly in source code comprehension, bug detection, and reasoning about software behavior. Unlike traditional supervised approaches that rely on static code metrics or historical defect labels, LLMs can analyze source code by capturing semantic dependencies and natural-language-like structure \cite{feng2020codebert, ahmad2021unified}.

Recent studies demonstrate that LLMs are capable of identifying vulnerabilities, generating bug fixes, and understanding complex control flows through pre-trained knowledge of programming languages and natural language \cite{chen2021evaluating, hu2023large}. However, leveraging these capabilities effectively for SDP remains underexplored. While most prior work has focused on code completion or repair tasks, the use of prompt engineering enables LLMs to reason about code quality without requiring labeled data. Prompting techniques allow researchers to frame the prediction process as a question–answer task, guiding the model to behave like an expert reviewer who examines whether a piece of code is likely defective \cite{white2023chatgpt, peng2023impact}.

\section{Methodology}
\label{sec:methodology}

We present the structure of our multi-agent debate framework with context expansion algorithm in the following subsections.  

\subsection{Motivation}
\label{sec:motivation}
Consecutive versions of a codebase often share many files~\cite{thota2020survey}, leading to an illusion of accuracy in traditional defect prediction models that treat files independently. 
Even with Large Language Models (LLMs), reasoning over changes alone has limitations, as many defects arise from how modified code sections interact with the rest of the file.
To address this, it is necessary to adopt a change-aware formulation that reasons about how code modifications affect program behavior. 

\subsection{Problem Formulation}
\label{sec:problem_formulation}

We study the problem of predicting whether a file becomes defective in the next version of a software project. Let a project evolve through versions
\(V_1, V_2, \dots, V_T\), where
\(V_t = \{ x^{(t)}_i \}_{i=1}^{N_t}\) and each file $x^{(t)}_i$ has a defect label
\(y^{(t)}_i \in \{0,1\}\), with 0 representing benign code and 1 representing defective code.

\subsubsection{Traditional formulation}

Classical SDP methods train a model on an
earlier version \(V_t\) and evaluate it on the subsequent version
\(V_{t+1}\). The model parameters \(\theta\) are typically learned by
minimizing a standard loss
\begin{align}
\theta^\star = \arg\min_\theta 
\sum_{i} \ell(f_\theta(x^{(t)}_i),\, y^{(t)}_i),
\end{align}
where the goal is to learn a function \(f_\theta\) that maps a file’s
source code \(x\) to its defect label \(y\)~\cite{akimova2021survey}. After training, the predictor is applied to the test set
$\{(x^{(t+1)}_i, y^{(t+1)}_i)\}_{i=1}^{N}$. This formulation treats files in
\(V_{t+1}\) independently, without considering how those files evolved
from \(V_t\). 

\subsubsection{File-evolution partition}
Between \(V_t\) and \(V_{t+1}\), files fall into three categories:

\paragraph*{Removed files}
\begin{align}
\mathcal{R}_{t,t+1}
= \{\, i \mid x^{(t)}_i \notin V_{t+1} \,\},
\end{align}
typically rare and excluded from prediction.

\paragraph*{Added files}
\begin{align}
\mathcal{A}_{t,t+1}
= \{\, j \mid x^{(t+1)}_j \notin V_t \,\},
\end{align}
also excluded in our work.

\paragraph*{Common files}
\begin{align}
\mathcal{C}_{t,t+1}
= \{\, i \mid x^{(t)}_i \text{ exists in } V_{t+1} \,\},
\end{align}
which constitute the majority (often \(>80\%\)) of all files.  

For each common file \(i\), define
$
d^{(i)} = \operatorname{diff}(x^{(t)}_i, x^{(t+1)}_i)
$.
Files with unchanged source code (\(d^{(i)} = \emptyset\)) trivially preserve their labels and are uninformative. We therefore focus on changed-source files:
\begin{align}
\mathcal{C}^{\text{chg}}_{t,t+1}
= \{\, i \in \mathcal{C}_{t,t+1} \mid d^{(i)} \neq \emptyset \,\}.
\end{align}
\subsubsection{Evolution subsets for changed-source files}
Among files whose source code changes between versions, each file falls
into exactly one of four status–transition subsets:
\[
\begin{aligned}
\mathrm{Bj0} &= \{\, i \mid y^{(t)}_i=j,\; y^{(t+1)}_i=0 \,\}\quad j \in \{0,1\},\\
\mathrm{Dj1} &= \{\, i \mid y^{(t)}_i=j,\; y^{(t+1)}_i=1 \,\}\quad j \in \{0,1\}.\\
\end{aligned}
\]
For evaluation, we merge them into two higher-level categories:
\[
\begin{aligned}
\mathcal{S}_{\text{unchanged}} &= \mathrm{B00} \cup \mathrm{D11} \quad ,\quad\mathcal{S}_{\text{changed}}   &= \mathrm{B10} \cup \mathrm{D01}.
\end{aligned}
\]
In practice, the size of the status-changed subset is extremely small compared to the total number of common files:
\begin{align}
|\mathcal{S}_{\text{changed}}| \ll |\mathcal{C}_{t,t+1}|.
\end{align}

\subsubsection{Prediction task}
For each changed-source file \(i\), a general model receives 
\[
(x^{(t)}_i,\, x^{(t+1)}_i,\, y^{(t)}_i)
\]
and must predict
\begin{align}
\hat{y}^{(t+1)}_i = f(x^{(t)}_i, x^{(t+1)}_i, y^{(t)}_i).
\end{align}
To model code evolution, we introduce the change \(d^{(i)}\) between \(x^{(t)}_i\) and \(x^{(t+1)}_i\) and define a context expansion function 
\begin{align}
    z_i = g(d^{(i)}, x^{(t+1)}_i),
\end{align}
where \(g(\cdot)\) acts as an estimator of the most relevant information in \((x^{(t)}_i, x^{(t+1)}_i)\) by extracting from \(x^{(t+1)}_i\) the code regions semantically related to the modified lines. Under this change-focused formulation, the prediction is given by
\begin{align}
\hat{y}^{(t+1)}_i = f(d^{(i)}, z_i, y^{(t)}_i).
\end{align}
The task is therefore to determine whether the edits \(d^{(i)}\) introduce, remove, or preserve a defect.
\subsection{File Matching Across Versions}
\label{sec:file_matching}

To reason about file evolution, we must first determine which files in
\(V_{t+1}\) correspond to files in \(V_t\). Relying solely on file paths
is fragile, since files may be renamed or relocated without changing
their underlying content. We therefore establish correspondences using a
matching procedure that combines path agreement with similarity-based
alignment. The full matching rule is summarized in Algorithm~\ref{alg:file_matching}. 
For the similarity function, we use the normalized overlap
\[
S(x,y)
=
\frac{2\,|\mathrm{lines}(x)\cap\mathrm{lines}(y)|}
     {|\mathrm{lines}(x)| + |\mathrm{lines}(y)|}.
\]
Under mild assumptions, namely that (i) an evolved file retains a significant portion of its ancestor’s lines and (ii) unrelated files share only incidental overlaps, there exists a similarity threshold~$T$ and a gap confidence multiplier~$c$ such that, whenever a genuine predecessor exists, the algorithm identifies it with probability at least $1-\delta$, and when no predecessor exists, the probability of incorrectly declaring a match is at most~$\delta$, where $\delta$ can be made arbitrarily small by choosing appropriate values of $T$ and~$c$ (Check \ref{appendix:matching_proof} for proof).

\begin{algorithm}[t]
\caption{SimilarityBasedFileMatching}
\label{alg:file_matching}

\KwIn{Old-version files $D_{\text{old}}$, new-version files $D_{\text{new}}$, similarity function $\mathrm{sim}$, threshold $T$, constant $c \ge 1$}
\KwOut{A mapping from new-version files to matched old-version files}

matches $\gets$ empty map\;

\ForEach{$x_2 \in D_{\text{new}}$}{

    \If{path$(x_2)$ exists in $D_{\text{old}}$}{
        matches[$x_2$] $\gets$ old-version file with identical path\;
    }
    \Else{

        compute scores $s(z) = \mathrm{sim}(x_2, z)$ for all $z \in D_{\text{old}}$\;

        sort scores as $s_1 \ge s_2 \ge \cdots \ge s_m$\;

        compute gaps $g_i = s_i - s_{i+1}$ for $1 \le i < m$\;

        $\mu_g \gets \mathrm{mean}(g_i)$\;

        $\sigma_g \gets \mathrm{std}(g_i)$\;

        \If{$s_1 \ge T$ \textbf{and} $(s_1 - s_2) \ge \mu_g + c \cdot \sigma_g$}{
            matches[$x_2$] $\gets$ file corresponding to $s_1$\;
        }
        \Else{
            matches[$x_2$] $\gets$ \textbf{Null}\;
        }
    }
}

\Return{matches}

\end{algorithm}

\subsection{Data Preparation: Context Expansion Algorithm}
\label{sec:subset_motivation_retrieval}

We extend diff-based analysis with an approach that expands the available
context around a change. Starting from the unified diff (Appendix~\ref{appendix:unified_diff_exp}), we locate the methods
that contain the modified lines. We then include the methods that directly call them as well as the methods they call. This expansion provides the model with nearby behavioral information that is often relevant for understanding the change but does not appear in the diff itself. Algorithm~\ref{alg:context_pipeline} summarizes the full procedure. 

\begin{footnotesize}
\begin{algorithm}[t]
\caption{ExtractFileLocalContext(diffs, src, depth, max\_lines)}
\label{alg:context_pipeline}

\KwIn{Unified diff \textit{diffs}, source \textit{src}, expansion depth $\ge 0$, \textbf{max\_lines}}
\KwOut{Context snippet}

changed\_lines $\gets$ extract line numbers from \textit{diffs}$;$

methods $\gets$ find all methods in \textit{src}$;$

initial $\gets$ methods overlapping with \textit{changed\_lines}$;$

callgraph $\gets$ construct a call graph by identifying caller--callee relationships between methods in \textit{src}$;$

visited $\gets \emptyset$; queue $\gets$ initial$;$

\While{queue not empty \textbf{and} depth $\ge 0$}{
    frontier $\gets$ queue; clear queue$;$

    \ForEach{method $m$ in frontier}{
        add $m$ to visited$;$ 
        enqueue all callers and callees of $m$ in the callgraph$;$ 
    }
    depth $\gets$ depth $- 1$;
}

snippets $\gets$ build contextual code snippets for all methods in \textit{visited},
including method signatures and surrounding source lines$;$

output $\gets$ concatenate snippets$;$

truncate output to \textbf{max\_lines}$;$

\Return output$;$

\end{algorithm}
\end{footnotesize}

\subsection{Multi-agent Debate Framework}
\label{sec:debate_arch}

Our approach replaces single-pass LLM prediction with a structured debate among multiple LLM-based agents. Inspired by prior work on multi-agent debate and argumentation frameworks \cite{kontarinis2014debate,liang2024encouraging,liu2024groupdebate}, the framework is centered on an adversarial exchange in which agents reason from opposing perspectives and challenge one another’s conclusions. The outcomes of this adversarial process are then put together to produce a single resolved prediction, mediated by a \emph{Judge} agent. \autoref{fig:debate_arch} illustrates the interactive process and information flow among agents in this framework.

\begin{figure*}[t]
    \centering
    \includegraphics[width=0.80\linewidth]{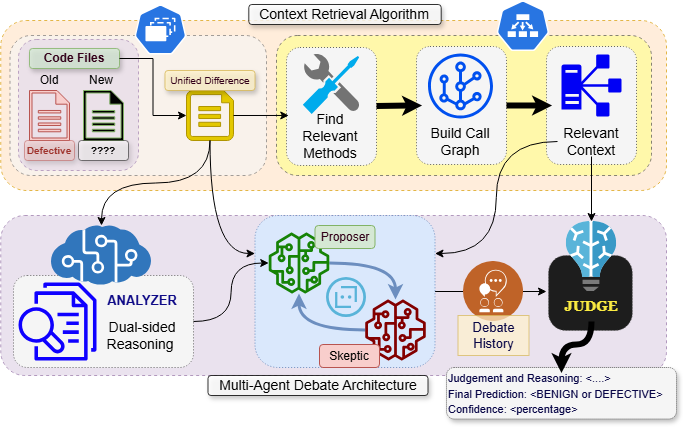}
    \caption{Overview of the multi-agent debate architecture.}
    \label{fig:debate_arch}
\end{figure*}

\subsubsection{Data Preparation and Debate Setup}

Each evaluation instance provides three elements:  
(1) a unified diff describing the code edits,  
(2) the expanded file-local context from Section~\ref{sec:subset_motivation_retrieval},  
and (3) the previous label.  
These form the shared evidence the agents use when debating the updated status of the file.

\subsubsection{Role Descriptions and Responsibilities}
The debate uses four LLM agents, each with a restricted prompt and access only to the information required for its role:

\begin{itemize}
    \item \textbf{Analyzer (Round 0).}  
    Receives the diff and expanded context and produces a summary of the changes and lists plausible benign and defective 
    interpretations without assigning a label.

    \item \textbf{Proposer (Rounds 1--$R$).}  
    Receives the diff, context, and the Analyzer’s output in the first round, plus possibly earlier debate messages.
    States whether the updated file should keep or change its previous label and supports the 
    claim with specific evidence.

    \item \textbf{Skeptic (Rounds 1--$R$).}  
    Receives the diff, context, and earlier debate messages.  
    Challenges the Proposer’s claim by pointing out weaknesses, alternatives, or missed cases.

    \item \textbf{Judge (Final Stage).}  
    Receives the diff, context, previous label, and the full debate history.  
    Weighs the arguments and produces the final prediction with a confidence score.
\end{itemize}

\subsubsection{Rationale and Effects}
This architecture frames prediction as an adversarial process rather than a direct inference step.  
The agents confront one another's claims and require that any 
assertion be supported by explicit reasoning.  
A \emph{benign} or \emph{defective} label prediction is accepted only after the system produces a 
coherent argument that survives criticism from the participating roles.  
The full exchange remains accessible after the decision, which provides a clear record of the 
supporting logic and allows readers to trace the path from initial evidence to final judgment.

\section{Experimental Design}
\label{sec:experimental_design}

\subsection{Research Questions}
\label{sec:research_questions}

The objective of this study is to investigate the limitations of traditional 
file-level Software Defect Prediction (SDP) models and to evaluate the effectiveness 
of the proposed \textit{Change-Aware LLM-Based Method} that integrates multi-agent 
reasoning. To achieve these goals, 
we formulate three main research questions as follows:

\begin{itemize}
    \item \textbf{RQ1.} \textit{How severe is the illusion of accuracy in SDP by traditional models?}
    \vspace{1em}
    \item \textbf{RQ2.} \textit{How much can the performance of LLM-based methods increase, introducing change-aware context into their prompts?}
    \vspace{1em}
    \item \textbf{RQ3.} \textit{Can multi-agent debate help to solve the problem of \textbf{illusion of accuracy}?}
\end{itemize}

Each research question is empirically addressed in Section~\ref{sec:results}.

\subsection{Dataset Description}

To ensure comparability and reproducibility with prior research in software defect prediction, we employ the PROMISE repository, which includes several well-established open-source Java projects.
The selected projects represent large-scale, real-world Java systems spanning various domains and architectures, including:

\begin{itemize}
    \item \textbf{Derby} – a lightweight relational database engine,
    \item \textbf{Camel} – an open-source integration framework for routing and mediation,
    \item \textbf{Groovy} – a dynamic language for the Java platform,
    \item \textbf{HBase} – a distributed, scalable NoSQL database,
    \item \textbf{Lucene} – a high-performance text search library,
    \item \textbf{JRuby} – a Ruby interpreter implemented in Java,
\end{itemize}

Each project contains multiple historical versions of Java source files, accompanied by manually curated defect labels indicating whether a file was \textit{Defective} or \textit{Benign} in a given release. A summary of the datasets across project versions is provided in~\autoref{tab:version_stats_final}. In addition to the defect label column (Bug), every dataset includes a unique file identifier (File) and corresponding source code stored in the SRC field.

\begin{table}[H]
\centering
\caption{Dataset version statistics.}
\label{tab:version_stats_final}
\small
\begin{tabular}{l p{2cm} r r r}
\toprule
Dataset & Version & \#Files & \%Benign & \%Defective \\
\midrule

\multirow{2}{*}{lucene}
  & 2.9.0 & 1368 & 91.59 & 8.41 \\
  & 3.0.0 & 1337 & 94.39 & 5.61 \\
\midrule

\multirow{2}{*}{groovy}
  & 1\_5\_7 & 757 & 97.89 & 2.11 \\
  & 1\_6\_BETA\_2 & 884 & 96.38 & 3.62 \\
\midrule

\multirow{2}{*}{derby}
  & 10.3.1.4 & 2206 & 84.18 & 15.82 \\
  & 10.5.1.1 & 2705 & 93.90 & 6.10 \\
\midrule

\multirow{2}{*}{camel}
  & 2.10.0 & 7914 & 98.24 & 1.76 \\
  & 2.11.0 & 8846 & 98.47 & 1.53 \\
\midrule

\multirow{2}{*}{hbase}
  & 0.95.0 & 1669 & 89.33 & 10.67 \\
  & 0.95.2 & 1834 & 91.11 & 8.89 \\
\midrule

\multirow{2}{*}{jruby}
  & 1.4.0 & 978 & 87.01 & 12.99 \\
  & 1.5.0 & 1131 & 97.79 & 2.21 \\
\bottomrule
\end{tabular}
\end{table}

\subsection{Traditional Baselines}
\label{sec:traditional_baselines}

Traditional classifiers such as \text{Random Forest}~\cite{breiman2001random}, \text{Logistic Regression}~\cite{hosmer2013applied}, and \text{Support Vector Machine (SVM)}~\cite{cortes1995support} require fixed-length numerical vector inputs. Since the \textit{SRC} field contains raw source code, each file must be embedded before classification. We use two embedding methods: \text{CodeBERT}~\cite{feng2020codebert} and \text{OpenAI’s text-embedding model (small)}~\cite{openai2024textembedding3}.

CodeBERT accepts sequences of up to 512 tokens \cite{feng2020codebert}, so we choose to divide the longer files into overlapping chunks of 512 tokens with a stride of 256. Each chunk is encoded, and we apply token-level pooling (\text{CLS}, mean, max, or min) \cite{reimers2019sentence} to obtain one vector per chunk. Files with multiple chunks are then aggregated using the same pooling operations, producing a single 768-dimensional embedding. Each configuration is defined by the combination of token and chunk pooling:
\[
(\text{token pooling}) \times (\text{chunk pooling})
\]
In contrast, the OpenAI embedding model processes up to 8192 tokens at once and produces a 1536-dimensional file embedding\cite{openai2024textembedding3}. No token-level pooling is required.

\begin{figure}[h!]
    \centering
    \includegraphics[width=1.0\linewidth]{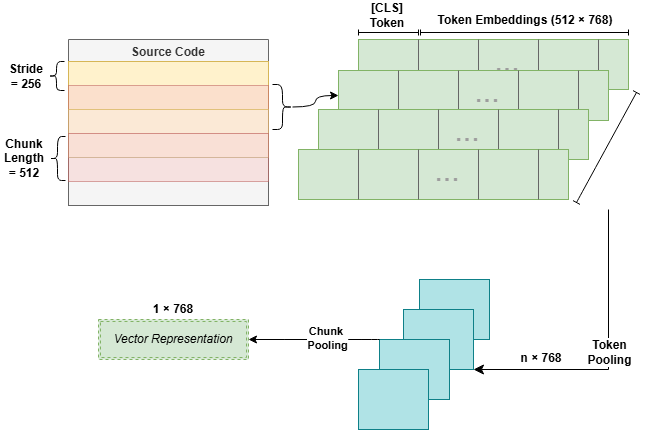}
    \caption{Overview of the embedding process for traditional baselines.}
    \label{fig:embedding_flow}
\end{figure}

The choice of training and testing versions is indicated in~\autoref{tab:version_stats_final}, following the standard temporal evaluation approach commonly used in software defect prediction \cite{menzies2007data, rahman2013how, yan2020software}. 

\subsection{Change-aware LLM Baselines}
\label{sec:llm_baselines_change}
We explore context-engineering\cite{brown2020language} by implementing nine LLM-based methods that differ in the type and amount of evolution-related information provided to the model. These methods range from single-version classification without historical information~\cite{li2024software} to full change-aware setting. 

\begin{enumerate}
\setcounter{enumi}{-1}

\item \textbf{Single-version classification:} 
The model receives only the target file (\text{SRC2}) and predicts whether it is Defective or Benign, as shown in~\autoref{fig:prompt_m0}.

\begin{figure}[ht]
\centering
\begin{promptbox}[Single-Version Classification]
You are given a source code file without any previous version. Decide if it is Defective or Benign.

[SRC2]
\textless source code here\textgreater

Question: What is the status of this file? (Defective or Benign)
\end{promptbox}
\caption{Prompt for Single-version classification.}
\label{fig:prompt_m0}
\end{figure}

\item \textbf{Direct comparison:} 
Both the previous version (\text{SRC1}, with known status) and the new version (\text{SRC2}) are provided.

\item \textbf{Diff-guided reasoning:} 
The model receives \text{SRC1}, \text{SRC2}, and a list of added, removed, and changed lines to focus on the modifications line by line.

\item \textbf{Patch-aware reasoning:} 
Along with both versions, the model is given a unified diff (patch) to reason about changes.

\item \textbf{SRC1-only change reasoning:} 
The model sees only \text{SRC1} and the diffs, without access to \text{SRC2}.

\item \textbf{Diff-only reasoning:} 
The model receives only the differences and the unified diff, along with the status of \text{SRC1}, to predict whether changes affect the defect state.

\begin{figure}[ht]
\centering
\begin{promptbox}[Diff-only Reasoning]
You are given only the differences and the unified diff. You also know that SRC1 was Defective. Determine if SRC2 remains Defective or changes status.

[SRC1] → Defective
[SRC2] → ???

[Differences]
\textless added/removed lines\textgreater

[Unified diff]
\textless patch-style diff\textgreater

Use this information to infer the status of SRC2.
\end{promptbox}
\caption{Prompt for Diff-only Reasoning.}
\label{fig:prompt_m5}
\end{figure}

\item \textbf{Local-context reasoning:} 
Only the modified lines and a few surrounding lines are provided.

\item \textbf{Diffs with exemplars:} 
The model receives the diff, unified diff, and a small set of known defective examples.

\item \textbf{Semantic repair reasoning:} 
A variation of Diff-only Reasoning.
\end{enumerate}

The full prompt templates for all LLM baselines are provided in Appendix~\ref{appendix:prompts_llm}.
In designing these prompts, we follow best practices in prompt engineering, such as providing clear and specific instructions \cite{wei2022chain}. 
Prior work has shown that clarity, context, ordering of input information, and step-by-step instructions can improve the consistency and quality of LLM outputs across complex tasks \cite{white2023prompt}. 
These design choices aim to help the model focus on the evolution-related information that is most relevant to defect status predictions \cite{liu2023pre}.

We evaluate a set of LLMs drawn from a variety of sources, including both open-source and proprietary models, as summarized in \autoref{tab:models}. All models are evaluated under a shared system prompt to ensure consistent behavior across baselines (Appendix~\ref{appendix:system_prompt}).

\begin{table}[!htbp]
\centering
\caption{LLMs evaluated as baselines for change-aware SDP. We additionally report \emph{parameters per token activated} (PPT), i.e., the number of parameters actively used for a single forward pass, shown in parentheses where applicable \cite{wan2023efficientllm}.}
\label{tab:models}
\renewcommand{\arraystretch}{1.3} 

\small
\setlength{\tabcolsep}{8pt}

\resizebox{\columnwidth}{!}{
\begin{tabular}{
p{2.6cm}@{\hspace{8pt}}
p{4.2cm}@{\hspace{8pt}}
p{2.3cm}
}
\hline
\textbf{Source} & \textbf{Model name} & \textbf{Params} \\
\hline

\multirow{3}{*}{DeepSeek}
 & \rule{0pt}{2.6ex}DeepSeek-Reasoning \cite{deepseek2025r1}
 & \multirow{3}{*}{671B (37B PPT)} \\
 & DeepSeek-V3.1 \cite{deepseek2024v3} &  \\
 & deepseek-v3.1:671b-terminus \cite{deepseek2025v31terminus} &  \\

\hline

Google
 & \rule{0pt}{2.6ex}Gemini-2.5-Flash-Lite \cite{google2025gemini25flashlite} & N/A \\

\hline

\multirow{4}{*}{OpenAI}
 & \rule{0pt}{2.6ex}GPT-5-Chat & N/A \\
 & GPT-5-Mini & N/A \\
 & GPT-5-Nano-2025-08-07 & N/A \\
 & GPT-O4-Mini-2025-04-16 & N/A \\

\hline

\multirow{5}{*}{Mistral AI}
 & \rule{0pt}{2.6ex}Mistral-Small-2503 \cite{mist} & 24B \\
 & Mistral-Small-3.1-24B-Instruct-2503 \cite{mistralai2025mistralsmallinstruct2503} & 24B \\
 & Open-Mixtral-8x7B \cite{jiang2024mixtral} & 46.7B \\
 & codestral-2405 \cite{mistralai2024codestral} & $\sim$22B \\
 & codestral-2501 \cite{mistralai2024codestral} & $\sim$22B \\

\hline

\multirow{2}{*}{Alibaba (Qwen)}
 & \rule{0pt}{2.6ex}Qwen2.5-Coder-32B-Instruct \cite{qwen2025coder32binstruct} & 32B \\

\hline
\end{tabular}
}
\end{table}

\subsection{Experimental Setting}
\label{sec:experimental_setting}
A complete summary of the resources, libraries, models, and parameter configurations used in our experiments is provided in Appendix~\ref{appendix:parameters_config}.

\subsection{Evaluation Setting}
\label{sec:res}
We conduct all sub-experiments for configuration tuning and model selection on the Lucene project (\autoref{tab:version_stats_final}), without loss of generality, as the insights we aim to establish are not project-dependent.
\subsubsection{LLM Output Parsing}

For LLM baselines, we enforce an output format to ensure reliable parsing. Each model is instructed to return its response as a JSON object:
{\small
\begin{verbatim}
{
  "explanation": "Explanation in English",
  "prediction": "Defective" or "Benign"
}
\end{verbatim}
}

In Multi-agent Debate, the final agent (Judge) is required to conclude its response with a prediction label and a confidence score.

{\small
\begin{verbatim}
### Final Prediction: <BENIGN or DEFECTIVE>
### Confidence: <confidence_percentage>
\end{verbatim}
}

LLM outputs may deviate from the prescribed format. When parsing fails, we apply regular-expression–based rules, refined through trial and error, to extract the final prediction.

\subsubsection{Standard Metrics}
Precision, Recall, and F1-score are computed using macro averaging; the term \emph{macro} is omitted hereafter for brevity. AUC is computed as the area under the receiver operating characteristic curve.

\subsubsection{Change-Aware Metrics}
For \textbf{RQ2} and \textbf{RQ3}, all metrics are computed over \emph{changed-source} files, i.e., common files with non-empty edits (\(d^{(i)} \neq \emptyset\)). We report:
\begin{align}
\text{unchanged-F1} &= \mathrm{F1}(\mathcal{S}_{\text{unchanged}}), \\
\text{changed-F1}   &= \mathrm{F1}(\mathcal{S}_{\text{changed}}),
\end{align}
along with the accuracy for each of the four evolution subsets (\(\mathrm{B00}\), \(\mathrm{B10}\), \(\mathrm{D01}\), \(\mathrm{D11}\)), as defined in Section~\ref{sec:problem_formulation}. 

We further define two harmonic-mean–based metrics~\cite{christen2023review, sokolova2009systematic, tharwat2021classification}, 
over the evolution subsets, where \(A_{\mathcal{S}}\) denotes the accuracy computed on subset \(\mathcal{S}\):
\begin{align}
\text{HMD} &=
\frac{2 \cdot A_{\mathrm{B10}} \cdot A_{\mathrm{D11}}}
{A_{\mathrm{B10}} + A_{\mathrm{D11}}}, \label{eq:hmb} \\
\text{HMB} &=
\frac{2 \cdot A_{\mathrm{B00}} \cdot A_{\mathrm{D01}}}
{A_{\mathrm{B00}} + A_{\mathrm{D01}}}. \label{eq:hmd}
\end{align}
Here, \textbf{HMB} (\emph{Harmonic Mean of Benign Prior Status}) captures balanced performance on files that were previously benign (\autoref{eq:hmb}), while \textbf{HMD} (\emph{Harmonic Mean of Defective Prior Status}) reflects balanced sensitivity on files with a defective history (\autoref{eq:hmd}).

\section{Experimental Results}
\label{sec:results}
\subsection{RQ1: Illusion Of Accuracy}
\label{sec:RQ1}

We report the file-evolution partitions for the PROMISE datasets in \autoref{tab:file_evolution_partition}. Across all projects, more than $78-85\%$ of the files are shared between versions. This is not surprising since software systems evolve incrementally, which naturally results in substantial file overlap between adjacent releases. This fact fundamentally undermines the traditional formulation of software defect prediction (SDP). Prior work on SDP has favored metrics such as F1-score, Matthews Correlation Coefficient (MCC), Precision, Recall, or False Positive Rate (FPR)~\cite{yang2025dlap}, since \textit{benign} class dominates defect status (see \autoref{tab:version_stats_final}).
While these metrics address the imbalance in defect labels, they overlook the imbalance that arises from file evolution across software versions. Consider the naive classification strategy in \autoref{fig:naive_classifier}. This naive classifier can achieve deceptively high values even under commonly used robustness-oriented metrics (\autoref{tab:naive_baseline}).
\begin{table}[!htbp]
\centering
\caption{File--evolution partition statistics (percentages) for consecutive
PROMISE versions in \autoref{tab:version_stats_final}. 
The terms $\mathcal{R}$ and $\mathcal{A}$ denote removed and added files, respectively.
$\mathcal{C}$ denotes common files, which are further divided into unchanged
($d=\emptyset$) and changed-source ($d\neq\emptyset$) files.}
\label{tab:file_evolution_partition}
\small
\setlength{\tabcolsep}{4pt}
\renewcommand{\arraystretch}{1.15}
\begin{tabular}{lcccccccc}
\toprule
\multirow{3}{*}{\textbf{Dataset}}
& \multirow{3}{*}{$\mathcal{R}$}
& \multirow{3}{*}{$\mathcal{A}$}
& \multicolumn{6}{c}{$\mathcal{C}$} \\
\cmidrule(lr){4-9}
& & 
& \multirow{2}{*}{$d=\emptyset$}
& \multicolumn{4}{c}{$d \neq \emptyset$} \\
\cmidrule(lr){5-9}
& & & 
& B00 & B10 & D11 & D01 \\
\midrule
lucene & 3.89 & 1.57 & \cellcolor{gray!20}\textbf{24.01} & \cellcolor{gray!20}\textbf{64.32} & 4.86 & 3.37 & 1.87 \\
groovy & 1.47 & 15.84 & \cellcolor{gray!20}\textbf{65.72} & \cellcolor{gray!20}\textbf{16.06} & 0.34 & 0.90 & 1.13 \\
derby  & 2.07 & 20.52 & \cellcolor{gray!20}\textbf{48.76} & \cellcolor{gray!20}\textbf{19.48} & 7.06 & 3.48 & 0.70 \\
camel  & 0.62 & 11.16 & \cellcolor{gray!20}\textbf{67.66} & \cellcolor{gray!20}\textbf{19.10} & 1.24 & 0.23 & 0.61 \\
hbase  & 7.20 & 16.19 & \cellcolor{gray!20}\textbf{38.11} & \cellcolor{gray!20}\textbf{32.55} & 7.03 & 2.56 & 3.54 \\
jruby  & 1.24 & 14.77 & \cellcolor{gray!20}\textbf{54.02} & \cellcolor{gray!20}\textbf{19.89} & 9.55 & 1.24 & 0.53 \\
\bottomrule
\end{tabular}
\end{table}

\begin{figure}[h]
\centering
\begin{mdframed}
\begin{center}
\textbf{Naive Label-Persistent Classifier}
\end{center}

\begin{center}
\itshape
For a given file $i$ in the future version $V_{t+1}$, two cases are possible.
(1) If $i \in \mathcal{C}$, that is, the file is common and already exists in the
previous version $V_t$, the classifier assigns to $i$ the label it had in $V_t$.
(2) If $i \in \mathcal{A}$, that is, the file is newly added in $V_{t+1}$, the
classifier predicts the label \textit{benign}.
\end{center}
\end{mdframed}
\caption{Naive label-persistent classification strategy.}
\label{fig:naive_classifier}
\end{figure}

\begin{table}[!htbp]
\centering
\caption{Performance of the naive label-persistent classifier.}
\label{tab:naive_baseline}
\small
\setlength{\tabcolsep}{3pt}
\renewcommand{\arraystretch}{1.15}
\begin{tabular}{lcccccc}
\toprule
\textbf{Dataset} & Prec & Rec & F1 & AUC & FPR & MCC \\
\midrule
lucene & 0.6995 & 0.7939 & 0.7353 & 0.7939 & 0.0523 & 0.4842 \\
groovy & 0.8317 & 0.6689 & 0.7216 & 0.6689 & 0.0059 & 0.4734 \\
camel  & 0.5653 & 0.5672 & 0.5663 & 0.5672 & 0.0137 & 0.1326 \\
hbase  & 0.5985 & 0.6056 & 0.6018 & 0.6056 & 0.0772 & 0.2040 \\
jruby  & 0.5496 & 0.7289 & 0.5627 & 0.7289 & 0.1022 & 0.2132 \\
derby  & 0.6530 & 0.7994 & 0.6931 & 0.7994 & 0.0921 & 0.4280 \\
\bottomrule
\end{tabular}
\end{table}

We explore the design choices in Section~\ref{sec:experimental_setting} for the implementation of traditional baselines described in Section~\ref{sec:traditional_baselines}.
The resulting F1 scores for all configurations are reported in \autoref{tab:ml_results}. The pooling strategy with the highest F1 score on the test set is selected (detailed report in \autoref{tab:subset_macro_f1_common}). Random Forest achieves the highest F1 score on common files (0.7190), the highest score on the \textit{Unchanged} subset (0.9039), and near-best relative performance on the \textit{Changed} subset (0.2373). On this basis, we select it as the representative traditional baseline along with SDPERL\cite{hesamolhokama2024sdperl} and DBN-based feature extraction approach\cite{wang2018deep} for comparison in \autoref{tab:naive_vs_methods}.

\begin{table*}[!htbp]
\centering
\caption{F1 values across embedding and pooling configurations for traditional classifiers. 
Columns correspond to combinations of token-level and chunk-level pooling, 
while \textit{openai} denotes embeddings without token-level pooling. 
Models labeled with \textit{\_Balanced} apply \texttt{class\_weight="balanced"} to handle class imbalance. 
\textbf{Bold} values indicate the best-performing configuration per model.}
\label{tab:ml_results}
\renewcommand{\arraystretch}{1.2} 
\resizebox{\textwidth}{!}{
\begin{tabular}{lcccccccccccc}
\hline
\textbf{Model} & \textbf{cls+avg} & \textbf{cls+max} & \textbf{cls+min} & \textbf{max+avg} & \textbf{max+max} & \textbf{max+min} & \textbf{mean+avg} & \textbf{mean+max} & \textbf{mean+min} & \textbf{openai+avg} & \textbf{openai+max} & \textbf{openai+min} \\
\hline
LogisticRegression & 0.5923 & 0.6287 & 0.5923 & 0.6517 & \textbf{0.6710} & 0.6350 & 0.5920 & 0.6152 & 0.5886 & 0.5961 & 0.5961 & 0.5961 \\
LogisticRegression\_Balanced & 0.5731 & 0.6342 & 0.6283 & 0.6567 & \textbf{0.6762} & 0.6525 & 0.6002 & 0.6058 & 0.6177 & 0.6142 & 0.6241 & 0.6180 \\
RandomForest & 0.5776 & 0.6155 & 0.6018 & 0.5345 & 0.5808 & 0.5329 & 0.5525 & 0.5689 & 0.5479 & 0.7157 & \textbf{0.7210} & 0.7102 \\
RandomForest\_Balanced & 0.6179 & 0.6188 & 0.6243 & 0.5555 & 0.5452 & 0.5353 & 0.5798 & 0.5985 & 0.5948 & 0.7102 & 0.7013 & \textbf{0.7156} \\
SVM\_Linear & 0.6267 & 0.6669 & 0.6533 & 0.6766 & \textbf{0.6907} & 0.6583 & 0.6382 & 0.6369 & 0.5940 & 0.6433 & 0.6415 & 0.6415 \\
SVM\_Linear\_Balanced & 0.5847 & 0.6563 & 0.6563 & \textbf{0.6953} & 0.6783 & 0.6750 & 0.6143 & 0.6192 & 0.6323 & 0.6328 & 0.6423 & 0.6436 \\
\hline
\end{tabular}
}
\end{table*}

\begin{table}[!htbp]
\centering
\caption{F1 scores on subsets of common files ($\mathcal{C}$) grouped by file
evolution (Section~\ref{sec:problem_formulation}): \textit{Same} files have
identical source code, \textit{Unchanged} files have changed source code with
unchanged labels, and \textit{Changed} files have both source-code and label
changes.}
\label{tab:subset_macro_f1_common}
\renewcommand{\arraystretch}{1.2} 
\begin{tabular}{l c >{\color{ForestGreen}}c >{\color{BrickRed}}c}
\hline
\textbf{Model} & \textbf{Same} & \textbf{Unchanged} & \textbf{Changed} \\
\hline
LogisticRegression            & 0.8992 & \textbf{0.8077} & \textbf{0.2484} \\
LogisticRegression\_Balanced  & 0.7809 & \textbf{0.7334} & \textbf{0.1807} \\
RandomForest                  & 0.9278 & \textbf{0.9039} & \textbf{0.2373} \\
RandomForest\_Balanced        & 0.8873 & \textbf{0.5304} & \textbf{0.1539} \\
SVM\_Linear                   & 0.9278 & \textbf{0.8203} & \textbf{0.2029} \\
SVM\_Linear\_Balanced         & 0.9278 & \textbf{0.8156} & \textbf{0.1539} \\
\hline
\end{tabular}
\end{table}

Our findings expose four fundamental problems in traditional SDP evaluation leading to an \textit{illusion of accuracy}:

\begin{itemize}
\item \textbf{Leakage through file overlap and label persistence.} Successive software versions share a large majority of files, many of which retain their labels, creating a leakage that favors predictors relying on historical labels (\autoref{tab:file_evolution_partition}).
\item \textbf{Metric inflation.} Commonly used metrics in SDP, such as F1, are inflated in this setting, as demonstrated by (\autoref{fig:naive_classifier} and \autoref{tab:naive_baseline}).
\item \textbf{Performance Drop on label-changing files.} Traditional models achieve high F1 scores on status-unchanged files but perform poorly on files whose defect labels change (\autoref{tab:subset_macro_f1_common}).
\item \textbf{Marginal gains over naive baselines.} State-of-the-art models achieve performance levels that are indistinguishable from the naive baseline (\autoref{tab:naive_vs_methods}).
\end{itemize}

\begin{table*}[!htbp]
\centering
\caption{Comparison of the naive baseline and representative methods across PROMISE datasets using Precision, Recall, F1, and AUC.}
\label{tab:naive_vs_methods}
\small
\setlength{\tabcolsep}{3pt}
\renewcommand{\arraystretch}{1.15}
\begin{tabular}{lcccccccccccccccc}
\toprule
\multirow{2}{*}{\textbf{Dataset}}
& \multicolumn{4}{c}{\textbf{Naive Classifier}}
& \multicolumn{4}{c}{\textbf{Random Forest}}
& \multicolumn{4}{c}{\textbf{SDPERL}}
& \multicolumn{4}{c}{\textbf{DBN}} \\
\cmidrule(lr){2-5}
\cmidrule(lr){6-9}
\cmidrule(lr){10-13}
\cmidrule(lr){14-17}
& Prec & Rec & F1 & AUC
& Prec & Rec & F1 & AUC
& Prec & Rec & F1 & AUC
& Prec & Rec & F1 & AUC \\
\midrule
lucene
& 0.69 & \textbf{0.79} & \textbf{0.73} & 0.793
& \textbf{0.72} & 0.71 & 0.72 & \textbf{0.888}
& 0.68 & 0.77 & 0.72 & 0.884
& 0.54 & 0.54 & 0.54 & 0.539 \\
groovy
& 0.83 & 0.66 & \textbf{0.72} & 0.668      
& \textbf{0.86} & 0.59 & 0.64 & \textbf{0.812}    
& 0.72 & \textbf{0.68} & 0.70 & 0.809
& 0.70 & 0.54 & 0.57 & 0.545 \\
derby
& \textbf{0.65} & \textbf{0.79} & \textbf{0.69} & 0.799
& 0.65 & 0.75 & 0.69 & 0.869       
& 0.62 & 0.79 & 0.65 & \textbf{0.872}
& 0.59 & 0.70 & 0.61 & 0.701 \\
camel
& 0.56 & 0.56 & 0.56 & 0.567  
& \textbf{0.57} & 0.53 & 0.54 & 0.795      
& 0.57 & \textbf{0.58} & \textbf{0.58} & \textbf{0.803}
& 0.56 & 0.58 & 0.57 & 0.583 \\
hbase
& 0.59 & 0.60 & \textbf{0.60} & 0.605    
& 0.57 & 0.55 & 0.56 & \textbf{0.71}
& 0.59 & \textbf{0.62} & 0.60 & 0.702
& \textbf{0.63} & 0.53 & 0.54 & 0.533 \\
jruby
& \textbf{0.54} & \textbf{0.72} & \textbf{0.56} & 0.728
& 0.53 & 0.60 & 0.54 & \textbf{0.847}       
& -- & -- & -- & --  
& -- & -- & -- & --        \\
\bottomrule
\end{tabular}
\end{table*}

\subsection{RQ2: Change-Aware LLM-Baselines}
\label{sec:RQ2}
We evaluate the change-aware LLM baselines introduced in Section \ref{sec:llm_baselines_change} using a diverse set of proprietary and open-source models (see Section \ref{sec:experimental_setting}). \autoref{tab:llm_results} reports the F1 scores achieved across the nine prompting methods M0–M8. Method M0 represents single-version classification, while M1–M8 correspond to the change-aware reasoning strategies defined in Section \ref{sec:llm_baselines_change}. The results demonstrate the clear advantage of change-aware approach.
Among the change-aware methods, M5 (Diff-only Reasoning) stands out by focusing exclusively on code-changes (see \autoref{fig:prompt_m5}), validating the effectiveness of the problem formulation introduced in Section \ref{sec:problem_formulation}.

\begin{table*}[!htbp]
\centering
\caption{F1 scores across models (rows) and experimental methods (columns). Best result per model is highlighted in bold. The M5 column is shaded to emphasize its strong performance.}
\label{tab:llm_results}
\renewcommand{\arraystretch}{1.2}
\setlength{\tabcolsep}{5pt}
\scriptsize
\begin{tabular}{lcccccccccc}
\toprule
\textbf{Model} & \textbf{M0} & \textbf{M1} & \textbf{M2} & \textbf{M3} & \textbf{M4} & \textbf{M5} & \textbf{M6} & \textbf{M7} & \textbf{M8} \\
\midrule
DeepSeek-Reasoning
& 0.502 & 0.513 & 0.514 & 0.478 & 0.518 & \cellcolor{gray!20}\textbf{0.645} & 0.530 & 0.530 & 0.637 \\

Gemini-2.5-Flash-Lite
& 0.490 & 0.524 & 0.497 & 0.507 & 0.581 & \cellcolor{gray!20}\textbf{0.609} & 0.522 & 0.556 & 0.567 \\

GPT-5-Mini
& 0.472 & 0.484 & 0.478 & 0.489 & 0.500 & \cellcolor{gray!20}0.629 & 0.514 & 0.574 & \textbf{0.632} \\

DeepSeek-V3.1
& 0.279 & 0.424 & 0.498 & 0.512 & 0.525 & \cellcolor{gray!20}\textbf{0.664} & 0.458 & 0.594 & 0.535 \\

DeepSeek-V3.1-671B-Terminus
& 0.255 & 0.449 & 0.518 & 0.553 & 0.520 & \cellcolor{gray!20}\textbf{0.588} & 0.466 & 0.576 & 0.546 \\

Codestral-2405
& 0.472 & 0.498 & 0.530 & 0.517 & 0.547 & \cellcolor{gray!20}\textbf{0.633} & 0.486 & 0.565 & 0.591 \\

Codestral-2501
& 0.463 & 0.466 & 0.541 & 0.490 & 0.575 & \cellcolor{gray!20}\textbf{0.650} & 0.469 & 0.603 & 0.589 \\

Mistral-Small-3.1-24B-Instruct-2503
& 0.500 & 0.498 & 0.498 & 0.475 & 0.494 & \cellcolor{gray!20}\textbf{0.630} & 0.482 & 0.514 & 0.625 \\

GPT-O4-Mini-2025-04-16
& 0.491 & 0.516 & 0.538 & 0.540 & 0.536 & \cellcolor{gray!20}\textbf{0.664} & 0.510 & 0.634 & 0.651 \\

Mistral-Small-2503
& -- & -- & -- & -- & -- & \cellcolor{gray!20}\textbf{0.624} & -- & -- & 0.608 \\

GPT-5-Chat
& -- & -- & -- & -- & -- & \cellcolor{gray!20}\textbf{0.614} & -- & -- & 0.613 \\

GPT-5-Nano-2025-08-07
& 0.489 & -- & -- & -- & -- & \cellcolor{gray!20}\textbf{0.541} & -- & 0.533 & 0.527 \\

Qwen2.5-Coder-32B-Instruct
& -- & -- & -- & -- & -- & \cellcolor{gray!20}\textbf{0.701} & -- & -- & 0.622 \\

Open-Mixtral-8x7B
& -- & -- & -- & -- & -- & \cellcolor{gray!20}\textbf{0.598} & -- & -- & 0.547 \\
\bottomrule
\end{tabular}
\end{table*}
A more detailed report is given in \autoref{tab:m5_macro_f1}. While M5 achieves strong overall F1, a consistent drop is observed on the Changed subset, mirroring the trends in \ref{sec:RQ1}. 
Analysis of the top-performing models across the four change subsets reveals that change-aware reasoning addresses label-persistence bias when the previous label is \textit{defective}; however, performance on D01 remains near zero while B00 accuracy approaches 99\% (see \autoref{tab:m5_subsets}). When the previous label is \textit{benign}, bug introduction (D01) is almost impossible to detect. As a result, strong total F1 scores can still mask failures on critical subsets, leading yet to another \textit{illusion of accuracy}, though less severe than in traditional formulations (check Section \ref{sec:RQ1}).

\begin{table}[H]
\centering
\caption{Detailed F1 scores for M5 across all models from \autoref{tab:llm_results}.}
\label{tab:m5_macro_f1}
\renewcommand{\arraystretch}{1.2}
\small
\setlength{\tabcolsep}{6pt}

\resizebox{\columnwidth}{!}{
\begin{tabular}{l c c c}
\hline
\textbf{Model} & \textbf{Total} & \textbf{Unchanged} & \textbf{Changed} \\
\hline
Codestral-2405 & 0.6334 & 0.7315 & 0.2685 \\
Codestral-2501 & 0.6503 & 0.7684 & 0.2484 \\
DeepSeek-Reasoning & 0.6449 & 0.7325 & 0.3836 \\
DeepSeek-V3.1 & 0.6637 & 0.8249 & 0.2105 \\
DeepSeek-V3.1-671B-Terminus & 0.5880 & 0.6825 & 0.2241 \\
Gemini-2.5-Flash-Lite & 0.6094 & 0.7042 & 0.2969 \\
GPT-5-Chat & 0.6144 & 0.6948 & 0.3576 \\
GPT-5-Mini & 0.6291 & 0.7052 & 0.3841 \\
GPT-5-Nano-2025-08-07 & 0.5414 & 0.5602 & 0.4299 \\
GPT-O4-Mini-2025-04-16 & 0.6637 & 0.8478 & 0.1739 \\
Mistral-Small-2503 & 0.6244 & 0.6944 & 0.4000 \\
Mistral-Small-3.1-24B-Instruct-2503 & 0.6297 & 0.7028 & 0.3960 \\
Open-Mixtral-8x7B & 0.5979 & 0.6329 & \textbf{0.4592} \\
Qwen2.5-Coder-32B-Instruct & \textbf{0.7014} & \textbf{0.8526} & 0.3130 \\
\hline
\end{tabular}
}
\end{table}

\begin{table}[H] 
\centering 
\caption{Representative subset accuracies (method M5). 
} 
\label{tab:m5_subsets} 
\scriptsize 
\setlength{\tabcolsep}{5pt} 
\renewcommand{\arraystretch}{1.2} 
\begin{tabular}{@{}l c >{\columncolor{red!20}}c >{\columncolor{gray!20}}c >{\columncolor{gray!20}}c c@{}}

\toprule 
Model & B00 & D01 & B10 & D11 & F1 \\ 
\midrule

Qwen2.5-Coder-32B 
& 99.4\% & \textbf{0.0\%} & 63.1\% & 62.2\% & {0.701} \\

GPT-O4-Mini-2025-04-16 
& 99.3\% & \textbf{4.0\%} & 26.2\% & 62.2\% & {0.664} \\

DeepSeek-V3.1 
& 98.5\% & \textbf{0.0\%} & 36.9\% & 64.4\% & {0.664} \\

Codestral-2501 
& 96.6\% & \textbf{4.0\%} & 41.5\% & 64.4\% & {0.650} \\

DeepSeek-Reasoning 
& 99.1\% & \textbf{0.0\%} & 86.2\% & 37.8\% & {0.645} \\

GPT-5-Mini 
& 98.1\% & \textbf{4.0\%} & 76.9\% & 37.8\% & {0.629} \\

\bottomrule 
\end{tabular} 
\end{table}

\subsection{RQ3: Proposed Multi-Agent Debate Framework }
\label{sec:RQ3}
We select four LLMs as candidates to fill the roles and report the performance across all 52 model–role combinations in \autoref{tab:all-combinations}. 
To select a single configuration, we first discard strictly dominated candidates, leaving 37 non-dominated configurations. Applying $Acc(B00) > 0.5$ and $\text{HMD}, \text{HMB} > 0.4$ further reduces this set to three candidates in \autoref{tab:all-combinations}, highlighted in blue. 
Finally, we choose \hyperlink{case3}{Combination~3} due to its near-dominance and fewer false positives (higher $B00$ performance), which is particularly important given that $B00$ is relatively large in scale. Moreover, we show performance sensitivity to \texttt{max\_lines} parameter in \autoref{tab:context-sensitivity}. We select $\texttt{max\_lines} = 600$ for final results in Section \ref{sec:final_comparison}.

\begin{table*}[!htbp]
\centering
\caption{
Performance across 52 model--role combinations. Model candidates are Gemini-2.5-Flash-Lite (Gemini), GPT-5-Mini (GPT), Codestral-2501 (Codestral), and DeepSeek-V3.1 (DeepSeek). \texttt{max\_lines} is $400$ with $\texttt{debate\_rounds} = 1$.
Evaluation uses 100\% of the B10, D01, and D11 subsets, and 100 randomly sampled instances from B00; for each metric column, cell colors are independently scaled to reflect relative performance per metric, with darker shading indicating more extreme values.
}
\label{tab:all-combinations}

\renewcommand{\arraystretch}{1.3} 
\resizebox{0.90\textwidth}{!}{
\begin{tabular}{
>{\centering\arraybackslash}m{2.5cm} 
>{\centering\arraybackslash}m{2.0cm} 
>{\centering\arraybackslash}m{2.0cm} 
>{\centering\arraybackslash}m{2.0cm} 
>{\centering\arraybackslash}m{2.0cm} 
>{\centering\arraybackslash}m{0.7cm} 
>{\centering\arraybackslash}m{0.7cm} 
>{\centering\arraybackslash}m{0.7cm} 
>{\centering\arraybackslash}m{0.7cm} 
>{\centering\arraybackslash}m{1.2cm} 
>{\centering\arraybackslash}m{1.2cm} 
>{\centering\arraybackslash}m{1.2cm} 
>{\centering\arraybackslash}m{1.2cm} 
>{\centering\arraybackslash}m{1.2cm} 
}
\toprule
\multirow{2}{*}{Group} & \multirow{2}{*}{Analyzer} & \multirow{2}{*}{Proposer} & \multirow{2}{*}{Skeptic} & \multirow{2}{*}{Judge} &
\multicolumn{4}{c}{Subset Accuracy} &
\multicolumn{2}{c}{Change-aware Metrics} &
\multicolumn{3}{c}{F1 Score} \\
\cmidrule(lr){6-9} \cmidrule(lr){10-11} \cmidrule(lr){12-14}
& & & & & B00 & D01 & D11 & B10 & HMD & HMB & Changed & Unchanged & Total \\
\midrule

\multirow{4}{*}{All roles same model}
& GPT & GPT & GPT & GPT & \cellcolor{orange!12}0.51 & \cellcolor{orange!18}0.44 & \cellcolor{green!11}0.80 & \cellcolor{red!14}0.14 & \cellcolor{red!10}0.24 & \cellcolor{green!12}0.47 & \cellcolor{red!13}0.22 & \cellcolor{orange!10}0.60 & \cellcolor{orange!15}0.45 \\
& DeepSeek & DeepSeek & DeepSeek & DeepSeek & \cellcolor{orange!13}0.49 & \cellcolor{orange!13}0.60 & \cellcolor{orange!17}0.69 & \cellcolor{orange!10}0.35 & \cellcolor{green!14}0.47 & \cellcolor{green!17}0.54 & \cellcolor{orange!10}0.42 & \cellcolor{orange!14}0.54 & \cellcolor{orange!10}0.50 \\
& Codestral & Codestral & Codestral & Codestral & \cellcolor{green!14}0.67 & \cellcolor{red!14}0.28 & \cellcolor{green!19}0.91 & \cellcolor{red!18}0.08 & \cellcolor{red!16}0.14 & \cellcolor{orange!12}0.40 & \cellcolor{red!19}0.13 & \cellcolor{green!16}0.74 & \cellcolor{green!10}0.51 \\
& Gemini & Gemini & Gemini & Gemini & \cellcolor{red!10}0.31 & \cellcolor{green!10}0.72 & \cellcolor{orange!17}0.69 & \cellcolor{orange!10}0.34 & \cellcolor{green!13}0.45 & \cellcolor{orange!10}0.43 & \cellcolor{green!11}0.44 & \cellcolor{red!10}0.43 & \cellcolor{orange!17}0.43 \\
\midrule

\multirow{24}{*}[2.0\baselineskip]{%
\makecell{Same model for \\Skeptic and Proposer,\\ different models for\\Analyzer and Judge
}
}
& DeepSeek & GPT & GPT & Codestral & \cellcolor{orange!16}0.42 & \cellcolor{orange!12}0.64 & \cellcolor{green!14}0.84 & \cellcolor{red!15}0.12 & \cellcolor{red!12}0.21 & \cellcolor{green!15}0.51 & \cellcolor{red!10}0.26 & \cellcolor{orange!13}0.55 & \cellcolor{orange!16}0.44 \\
& DeepSeek & GPT & GPT & Gemini & \cellcolor{orange!18}0.36 & \cellcolor{orange!16}0.52 & \cellcolor{orange!15}0.71 & \cellcolor{orange!12}0.31 & \cellcolor{green!12}0.43 & \cellcolor{orange!10}0.43 & \cellcolor{orange!14}0.36 & \cellcolor{orange!18}0.47 & \cellcolor{orange!17}0.43 \\
& Codestral & GPT & GPT & DeepSeek & \cellcolor{orange!18}0.36 & \cellcolor{orange!12}0.64 & \cellcolor{green!14}0.84 & \cellcolor{orange!14}0.29 & \cellcolor{green!12}0.43 & \cellcolor{green!11}0.46 & \cellcolor{orange!12}0.39 & \cellcolor{orange!15}0.51 & \cellcolor{orange!13}0.47 \\
& Codestral & GPT & GPT & Gemini & \cellcolor{red!10}0.31 & \cellcolor{orange!16}0.52 & \cellcolor{red!14}0.60 & \cellcolor{orange!10}0.34 & \cellcolor{green!12}0.43 & \cellcolor{orange!13}0.39 & \cellcolor{orange!12}0.38 & \cellcolor{red!12}0.40 & \cellcolor{red!11}0.39 \\
& Gemini & GPT & GPT & DeepSeek & \cellcolor{orange!16}0.40 & \cellcolor{red!14}0.28 & \cellcolor{green!13}0.82 & \cellcolor{red!15}0.12 & \cellcolor{red!12}0.21 & \cellcolor{orange!17}0.33 & \cellcolor{red!16}0.17 & \cellcolor{orange!14}0.53 & \cellcolor{red!11}0.39 \\
& Gemini & GPT & GPT & Codestral & \cellcolor{orange!19}0.35 & \cellcolor{orange!18}0.44 & \cellcolor{green!13}0.82 & \cellcolor{red!16}0.11 & \cellcolor{red!13}0.19 & \cellcolor{orange!13}0.39 & \cellcolor{red!14}0.20 & \cellcolor{orange!16}0.50 & \cellcolor{red!12}0.38 \\
\addlinespace[1.5ex]

& GPT & DeepSeek & DeepSeek & Codestral & \cellcolor{red!11}0.29 & \cellcolor{green!14}0.84 & \cellcolor{orange!10}0.78 & \cellcolor{red!10}0.19 & \cellcolor{orange!16}0.30 & \cellcolor{orange!10}0.43 & \cellcolor{orange!14}0.36 & \cellcolor{orange!19}0.44 & \cellcolor{orange!19}0.41 \\
& GPT & DeepSeek & DeepSeek & Gemini & \cellcolor{red!14}0.23 & \cellcolor{green!13}0.80 & \cellcolor{red!11}0.64 & \cellcolor{orange!18}0.23 & \cellcolor{orange!13}0.34 & \cellcolor{orange!15}0.36 & \cellcolor{orange!12}0.39 & \cellcolor{red!14}0.36 & \cellcolor{red!13}0.37 \\
& Codestral & DeepSeek & DeepSeek & GPT & \cellcolor{red!10}0.32 & \cellcolor{orange!10}0.68 & \cellcolor{red!12}0.62 & \cellcolor{orange!11}0.33 & \cellcolor{green!12}0.43 & \cellcolor{green!10}0.44 & \cellcolor{green!10}0.43 & \cellcolor{red!11}0.41 & \cellcolor{orange!18}0.42 \\
& Codestral & DeepSeek & DeepSeek & Gemini & \cellcolor{red!10}0.31 & \cellcolor{orange!13}0.60 & \cellcolor{red!12}0.62 & \cellcolor{green!11}0.38 & \cellcolor{green!15}0.48 & \cellcolor{orange!11}0.41 & \cellcolor{green!11}0.44 & \cellcolor{red!11}0.41 & \cellcolor{orange!18}0.42 \\
& Gemini & DeepSeek & DeepSeek & GPT & \cellcolor{orange!19}0.34 & \cellcolor{orange!16}0.52 & \cellcolor{orange!18}0.68 & \cellcolor{orange!12}0.31 & \cellcolor{green!12}0.43 & \cellcolor{orange!11}0.41 & \cellcolor{orange!13}0.37 & \cellcolor{orange!19}0.44 & \cellcolor{orange!18}0.42 \\
& Gemini & DeepSeek & DeepSeek & Codestral & \cellcolor{red!14}0.22 & \cellcolor{orange!18}0.44 & \cellcolor{orange!15}0.71 & \cellcolor{red!15}0.12 & \cellcolor{red!12}0.21 & \cellcolor{red!10}0.29 & \cellcolor{red!14}0.21 & \cellcolor{red!13}0.37 & \cellcolor{red!20}0.31 \\
\addlinespace[1.5ex]

\hypertarget{case1}{\textbf{(1)}} & \cellcolor{blue!20}GPT & \cellcolor{blue!20}Codestral & \cellcolor{blue!20}Codestral & \cellcolor{blue!20}DeepSeek & \cellcolor{green!18}0.78 & \cellcolor{red!11}0.36 & \cellcolor{orange!17}0.69 & \cellcolor{orange!14}0.29 & \cellcolor{green!11}0.41 & \cellcolor{green!13}0.49 & \cellcolor{orange!18}0.30 & \cellcolor{green!15}0.72 & \cellcolor{green!14}0.55 \label{Case:1}\\
\hypertarget{case2}{\textbf{(2)}} & \cellcolor{blue!20}GPT & \cellcolor{blue!20}Codestral & \cellcolor{blue!20}Codestral & \cellcolor{blue!20}Gemini & \cellcolor{green!15}0.70 & \cellcolor{red!13}0.32 & \cellcolor{orange!17}0.69 & \cellcolor{green!16}0.46 & \cellcolor{green!19}0.55 & \cellcolor{green!10}0.44 & \cellcolor{orange!12}0.39 & \cellcolor{green!12}0.67 & \cellcolor{green!15}0.56 \label{Case:2}\\
& DeepSeek & Codestral & Codestral & GPT & \cellcolor{green!17}0.76 & \cellcolor{red!14}0.28 & \cellcolor{orange!17}0.69 & \cellcolor{orange!18}0.22 & \cellcolor{orange!14}0.33 & \cellcolor{orange!11}0.41 & \cellcolor{red!12}0.23 & \cellcolor{green!15}0.71 & \cellcolor{green!11}0.52 \\
& DeepSeek & Codestral & Codestral & Gemini & \cellcolor{green!15}0.70 & \cellcolor{red!14}0.28 & \cellcolor{red!14}0.60 & \cellcolor{green!14}0.42 & \cellcolor{green!16}0.49 & \cellcolor{orange!12}0.40 & \cellcolor{orange!14}0.35 & \cellcolor{green!11}0.64 & \cellcolor{green!11}0.52 \\
& Gemini & Codestral & Codestral & GPT & \cellcolor{green!14}0.67 & \cellcolor{red!20}0.12 & \cellcolor{orange!11}0.76 & \cellcolor{orange!15}0.27 & \cellcolor{orange!10}0.39 & \cellcolor{red!17}0.20 & \cellcolor{red!14}0.21 & \cellcolor{green!13}0.68 & \cellcolor{orange!10}0.50 \\
& Gemini & Codestral & Codestral & DeepSeek & \cellcolor{green!16}0.73 & \cellcolor{red!13}0.32 & \cellcolor{red!12}0.62 & \cellcolor{orange!18}0.22 & \cellcolor{orange!14}0.32 & \cellcolor{green!10}0.44 & \cellcolor{red!12}0.24 & \cellcolor{green!12}0.67 & \cellcolor{orange!10}0.50 \\
\addlinespace[1.5ex]

& GPT & Gemini & Gemini & DeepSeek & \cellcolor{red!11}0.30 & \cellcolor{green!16}0.91 & \cellcolor{orange!15}0.71 & \cellcolor{orange!14}0.29 & \cellcolor{green!11}0.41 & \cellcolor{green!11}0.46 & \cellcolor{green!11}0.45 & \cellcolor{red!10}0.43 & \cellcolor{orange!16}0.44 \\
& GPT & Gemini & Gemini & Codestral & \cellcolor{red!20}0.09 & \cellcolor{green!18}0.96 & \cellcolor{orange!15}0.71 & \cellcolor{orange!12}0.32 & \cellcolor{green!12}0.44 & \cellcolor{red!20}0.16 & \cellcolor{green!14}0.50 & \cellcolor{red!20}0.26 & \cellcolor{red!14}0.36 \\
& DeepSeek & Gemini & Gemini & GPT & \cellcolor{red!10}0.32 & \cellcolor{green!13}0.80 & \cellcolor{red!14}0.60 & \cellcolor{green!11}0.37 & \cellcolor{green!13}0.45 & \cellcolor{green!11}0.46 & \cellcolor{green!14}0.49 & \cellcolor{red!11}0.41 & \cellcolor{orange!16}0.44 \\
& DeepSeek & Gemini & Gemini & Codestral & \cellcolor{red!17}0.16 & \cellcolor{green!15}0.88 & \cellcolor{orange!14}0.73 & \cellcolor{orange!12}0.32 & \cellcolor{green!13}0.45 & \cellcolor{red!12}0.27 & \cellcolor{green!13}0.48 & \cellcolor{red!16}0.33 & \cellcolor{red!11}0.39 \\
& Codestral & Gemini & Gemini & GPT & \cellcolor{red!15}0.20 & \cellcolor{green!15}0.88 & \cellcolor{red!14}0.60 & \cellcolor{green!17}0.47 & \cellcolor{green!18}0.53 & \cellcolor{orange!17}0.33 & \cellcolor{green!19}0.58 & \cellcolor{red!16}0.32 & \cellcolor{orange!18}0.42 \\
& Codestral & Gemini & Gemini & DeepSeek & \cellcolor{red!10}0.31 & \cellcolor{green!11}0.76 & \cellcolor{red!11}0.64 & \cellcolor{orange!10}0.34 & \cellcolor{green!12}0.44 & \cellcolor{green!10}0.44 & \cellcolor{green!11}0.45 & \cellcolor{red!11}0.41 & \cellcolor{orange!17}0.43 \\
\addlinespace[1.5ex]
\midrule

\multirow{24}{*}{All roles different}
& GPT & DeepSeek & Codestral & Gemini & \cellcolor{green!13}0.64 & \cellcolor{red!17}0.20 & \cellcolor{red!12}0.62 & \cellcolor{orange!10}0.35 & \cellcolor{green!13}0.45 & \cellcolor{orange!19}0.30 & \cellcolor{orange!19}0.28 & \cellcolor{orange!10}0.61 & \cellcolor{orange!12}0.48 \\
& GPT & DeepSeek & Gemini & Codestral & \cellcolor{red!18}0.13 & \cellcolor{green!19}1.00 & \cellcolor{green!19}0.91 & \cellcolor{red!13}0.15 & \cellcolor{orange!18}0.26 & \cellcolor{red!14}0.23 & \cellcolor{orange!13}0.37 & \cellcolor{red!14}0.35 & \cellcolor{red!14}0.36 \\
& GPT & Codestral & DeepSeek & Gemini & \cellcolor{red!12}0.26 & \cellcolor{orange!12}0.64 & \cellcolor{green!11}0.80 & \cellcolor{orange!12}0.32 & \cellcolor{green!14}0.46 & \cellcolor{orange!14}0.37 & \cellcolor{orange!10}0.41 & \cellcolor{red!11}0.42 & \cellcolor{orange!18}0.42 \\
& GPT & Codestral & Gemini & DeepSeek & \cellcolor{red!11}0.29 & \cellcolor{green!12}0.79 & \cellcolor{green!11}0.80 & \cellcolor{orange!18}0.23 & \cellcolor{orange!12}0.36 & \cellcolor{orange!11}0.42 & \cellcolor{orange!12}0.38 & \cellcolor{orange!19}0.45 & \cellcolor{orange!18}0.42 \\
& GPT & Gemini & DeepSeek & Codestral & \cellcolor{red!10}0.31 & \cellcolor{green!14}0.84 & \cellcolor{green!14}0.84 & \cellcolor{orange!10}0.35 & \cellcolor{green!16}0.50 & \cellcolor{green!10}0.45 & \cellcolor{green!14}0.49 & \cellcolor{orange!18}0.47 & \cellcolor{orange!12}0.48 \\
& GPT & Gemini & Codestral & DeepSeek & \cellcolor{green!16}0.72 & \cellcolor{red!13}0.32 & \cellcolor{green!16}0.87 & \cellcolor{red!20}0.05 & \cellcolor{red!20}0.09 & \cellcolor{green!10}0.44 & \cellcolor{red!20}0.12 & \cellcolor{green!17}0.75 & \cellcolor{green!10}0.51 \\
\addlinespace[1.5ex]

& DeepSeek & GPT & Codestral & Gemini & \cellcolor{green!17}0.76 & \cellcolor{red!15}0.24 & \cellcolor{red!20}0.53 & \cellcolor{green!19}0.51 & \cellcolor{green!18}0.52 & \cellcolor{orange!15}0.36 & \cellcolor{orange!12}0.38 & \cellcolor{green!11}0.64 & \cellcolor{green!13}0.54 \\
& DeepSeek & GPT & Gemini & Codestral & \cellcolor{red!17}0.16 & \cellcolor{green!11}0.76 & \cellcolor{green!14}0.84 & \cellcolor{orange!16}0.26 & \cellcolor{green!10}0.40 & \cellcolor{red!12}0.26 & \cellcolor{orange!11}0.40 & \cellcolor{red!14}0.36 & \cellcolor{red!12}0.38 \\
& DeepSeek & Codestral & GPT & Gemini & \cellcolor{orange!17}0.38 & \cellcolor{orange!16}0.52 & \cellcolor{green!11}0.80 & \cellcolor{green!12}0.40 & \cellcolor{green!18}0.53 & \cellcolor{green!10}0.44 & \cellcolor{orange!10}0.42 & \cellcolor{orange!15}0.51 & \cellcolor{orange!12}0.48 \\
& DeepSeek & Codestral & Gemini & GPT & \cellcolor{red!13}0.24 & \cellcolor{green!13}0.80 & \cellcolor{green!16}0.86 & \cellcolor{orange!15}0.27 & \cellcolor{green!11}0.41 & \cellcolor{orange!14}0.38 & \cellcolor{orange!10}0.42 & \cellcolor{red!10}0.43 & \cellcolor{orange!18}0.42 \\
& DeepSeek & Gemini & GPT & Codestral & \cellcolor{orange!17}0.39 & \cellcolor{orange!14}0.56 & \cellcolor{green!18}0.89 & \cellcolor{red!18}0.08 & \cellcolor{red!16}0.14 & \cellcolor{green!11}0.46 & \cellcolor{red!14}0.20 & \cellcolor{orange!14}0.54 & \cellcolor{orange!18}0.42 \\
& DeepSeek & Gemini & Codestral & GPT & \cellcolor{green!16}0.73 & \cellcolor{orange!17}0.48 & \cellcolor{green!18}0.89 & \cellcolor{red!20}0.05 & \cellcolor{red!20}0.09 & \cellcolor{green!19}0.58 & \cellcolor{red!17}0.16 & \cellcolor{green!18}0.77 & \cellcolor{green!13}0.54 \\
\addlinespace[1.5ex]

& Codestral & GPT & DeepSeek & Gemini & \cellcolor{orange!19}0.35 & \cellcolor{green!13}0.80 & \cellcolor{red!15}0.59 & \cellcolor{orange!12}0.32 & \cellcolor{green!11}0.42 & \cellcolor{green!13}0.49 & \cellcolor{green!12}0.46 & \cellcolor{red!11}0.42 & \cellcolor{orange!17}0.43 \\
& Codestral & GPT & Gemini & DeepSeek & \cellcolor{orange!19}0.34 & \cellcolor{green!10}0.72 & \cellcolor{green!14}0.84 & \cellcolor{orange!15}0.27 & \cellcolor{green!11}0.41 & \cellcolor{green!12}0.47 & \cellcolor{orange!11}0.40 & \cellcolor{orange!16}0.50 & \cellcolor{orange!14}0.46 \\
& Codestral & DeepSeek & GPT & Gemini & \cellcolor{orange!19}0.33 & \cellcolor{orange!18}0.44 & \cellcolor{orange!18}0.67 & \cellcolor{orange!12}0.32 & \cellcolor{green!12}0.44 & \cellcolor{orange!14}0.38 & \cellcolor{orange!14}0.35 & \cellcolor{red!10}0.43 & \cellcolor{red!10}0.40 \\
& Codestral & DeepSeek & Gemini & GPT & \cellcolor{red!16}0.17 & \cellcolor{green!11}0.76 & \cellcolor{orange!14}0.73 & \cellcolor{orange!10}0.34 & \cellcolor{green!14}0.46 & \cellcolor{red!11}0.28 & \cellcolor{green!11}0.45 & \cellcolor{red!15}0.34 & \cellcolor{red!12}0.38 \\
& Codestral & Gemini & GPT & DeepSeek & \cellcolor{orange!17}0.39 & \cellcolor{orange!16}0.52 & \cellcolor{green!14}0.84 & \cellcolor{red!12}0.16 & \cellcolor{orange!18}0.26 & \cellcolor{green!10}0.45 & \cellcolor{red!10}0.26 & \cellcolor{orange!14}0.54 & \cellcolor{orange!17}0.43 \\
& Codestral & Gemini & DeepSeek & GPT & \cellcolor{orange!19}0.33 & \cellcolor{orange!12}0.64 & \cellcolor{green!14}0.84 & \cellcolor{orange!13}0.30 & \cellcolor{green!12}0.44 & \cellcolor{green!10}0.44 & \cellcolor{orange!12}0.39 & \cellcolor{orange!17}0.49 & \cellcolor{orange!15}0.45 \\
\addlinespace[1.5ex]

& Gemini & GPT & DeepSeek & Codestral & \cellcolor{orange!16}0.40 & \cellcolor{orange!13}0.60 & \cellcolor{orange!10}0.78 & \cellcolor{red!13}0.15 & \cellcolor{orange!18}0.26 & \cellcolor{green!12}0.48 & \cellcolor{orange!19}0.28 & \cellcolor{orange!15}0.52 & \cellcolor{orange!17}0.43 \\
\hypertarget{case3}{\textbf{(3)}} & \cellcolor{blue!20}Gemini & \cellcolor{blue!20}GPT & \cellcolor{blue!20}Codestral & \cellcolor{blue!20}DeepSeek & \cellcolor{green!19}0.81 & \cellcolor{red!12}0.33 & \cellcolor{green!13}0.82 & \cellcolor{orange!12}0.31 & \cellcolor{green!13}0.45 & \cellcolor{green!12}0.47 & \cellcolor{orange!18}0.30 & \cellcolor{green!19}0.80 & \cellcolor{green!19}0.60 \label{Case:3}\\
& Gemini & DeepSeek & GPT & Codestral & \cellcolor{orange!19}0.33 & \cellcolor{orange!16}0.52 & \cellcolor{orange!11}0.76 & \cellcolor{orange!16}0.25 & \cellcolor{orange!11}0.37 & \cellcolor{orange!12}0.40 & \cellcolor{orange!16}0.32 & \cellcolor{orange!18}0.46 & \cellcolor{orange!19}0.41 \\
& Gemini & DeepSeek & Codestral & GPT & \cellcolor{green!16}0.72 & \cellcolor{red!10}0.40 & \cellcolor{red!12}0.63 & \cellcolor{orange!18}0.22 & \cellcolor{orange!14}0.32 & \cellcolor{green!15}0.51 & \cellcolor{red!10}0.27 & \cellcolor{green!12}0.66 & \cellcolor{orange!10}0.50 \\
& Gemini & Codestral & GPT & DeepSeek & \cellcolor{orange!16}0.42 & \cellcolor{red!11}0.36 & \cellcolor{orange!10}0.78 & \cellcolor{red!14}0.14 & \cellcolor{red!10}0.24 & \cellcolor{orange!13}0.39 & \cellcolor{red!14}0.20 & \cellcolor{orange!14}0.53 & \cellcolor{orange!19}0.41 \\
& Gemini & Codestral & DeepSeek & GPT & \cellcolor{orange!14}0.45 & \cellcolor{orange!13}0.60 & \cellcolor{orange!14}0.73 & \cellcolor{orange!14}0.29 & \cellcolor{green!11}0.42 & \cellcolor{green!15}0.52 & \cellcolor{orange!12}0.38 & \cellcolor{orange!14}0.54 & \cellcolor{orange!12}0.48 \\
\addlinespace[1.5ex]

\bottomrule
\end{tabular}
}
\end{table*}

\begin{table*}[!htbp]
\centering
\scriptsize
\setlength{\tabcolsep}{1.8pt}
\caption{
Effect of \texttt{max\_lines} on performance for the selected configuration.
Values are reported as mean $\pm$ 95\% confidence interval over ten independent runs.
Superscript $^{*}$ denotes statistically significant differences against $MaxLines = 0$ for the corresponding metric (Wilcoxon signed-rank test~\cite{Wilcoxonsigned}, $p < 0.05$).
}
\resizebox{\textwidth}{!}{%
\begin{tabular}{c
cccc
cc
ccc}
\toprule
\multirow{2}{*}{Max} &
\multicolumn{4}{c}{Subset Acc.} &
\multicolumn{2}{c}{Change-aware Metrics} &
\multicolumn{3}{c}{F1} \\
\cmidrule(lr){2-5}
\cmidrule(lr){6-7}
\cmidrule(lr){8-10}
Lines & B00 & B10 & D01 & D11 & HMB & HMD & Changed & Unchanged & Total \\
\midrule
\cellcolor{red!20}0    & 0.76 $\pm$ 0.03 & 0.29 $\pm$ 0.03 & 0.31 $\pm$ 0.06 & 0.66 $\pm$ 0.05 & 0.43 $\pm$ 0.05 & 0.40 $\pm$ 0.03 & 0.28 $\pm$ 0.03 & 0.70 $\pm$ 0.03 & 0.54 $\pm$ 0.01 \\

100  & 0.76 $\pm$ 0.03 & 0.28 $\pm$ 0.03 & 0.36 $\pm$ 0.07 & 0.72 $\pm$ 0.04 & 0.48 $\pm$ 0.05 & 0.40 $\pm$ 0.03 & 0.29 $\pm$ 0.03 & 0.72 $\pm$ 0.02 & 0.55 $\pm$ 0.01 \\

200  & 0.77 $\pm$ 0.04 & 0.27 $\pm$ 0.04 & 0.32 $\pm$ 0.06 & 0.67 $\pm$ 0.02 & 0.44 $\pm$ 0.06 & 0.38 $\pm$ 0.04 & 0.28 $\pm$ 0.04 & 0.71 $\pm$ 0.03 & 0.54 $\pm$ 0.02 \\

400  & 0.78 $\pm$ 0.03 & 0.31 $\pm$ 0.04 & 0.29 $\pm$ 0.05 & 0.69 $\pm$ 0.05 & 0.41 $\pm$ 0.05 & 0.43 $\pm$ 0.04 & 0.29 $\pm$ 0.03 & 0.72 $\pm$ 0.02 & 0.55 $\pm$ 0.02 \\

\cellcolor{green!20}600  & 0.77 $\pm$ 0.03 & 0.32 $\pm$ 0.04 & \cellcolor{gray!20}0.41 $\pm$ 0.05$^{*}$ & 0.69 $\pm$ 0.04 & \cellcolor{gray!20}0.52 $\pm$ 0.06$^{*}$ & 0.43 $\pm$ 0.04 & \cellcolor{gray!20}0.33 $\pm$ 0.03$^{*}$ & 0.71 $\pm$ 0.02 & \cellcolor{gray!20}0.56 $\pm$ 0.02$^{*}$ \\

800  & 0.72 $\pm$ 0.05 & 0.29 $\pm$ 0.04 & 0.35 $\pm$ 0.06 & 0.70 $\pm$ 0.04 & 0.46 $\pm$ 0.05 & 0.41 $\pm$ 0.04 & 0.30 $\pm$ 0.03 & 0.69 $\pm$ 0.04 & 0.54 $\pm$ 0.02 \\

1000 & 0.73 $\pm$ 0.04 & 0.29 $\pm$ 0.02 & 0.36 $\pm$ 0.07 & \cellcolor{gray!20}0.72 $\pm$ 0.04$^{*}$ & 0.47 $\pm$ 0.06 & 0.41 $\pm$ 0.02 & 0.30 $\pm$ 0.02 & 0.71 $\pm$ 0.02 & 0.55 $\pm$ 0.01 \\
\bottomrule
\end{tabular}
}

\label{tab:context-sensitivity}
\end{table*}


\section{Discussion}
\label{sec:discussion}

\begin{figure*}[!htbp]
\centering
\includegraphics[width=\linewidth]{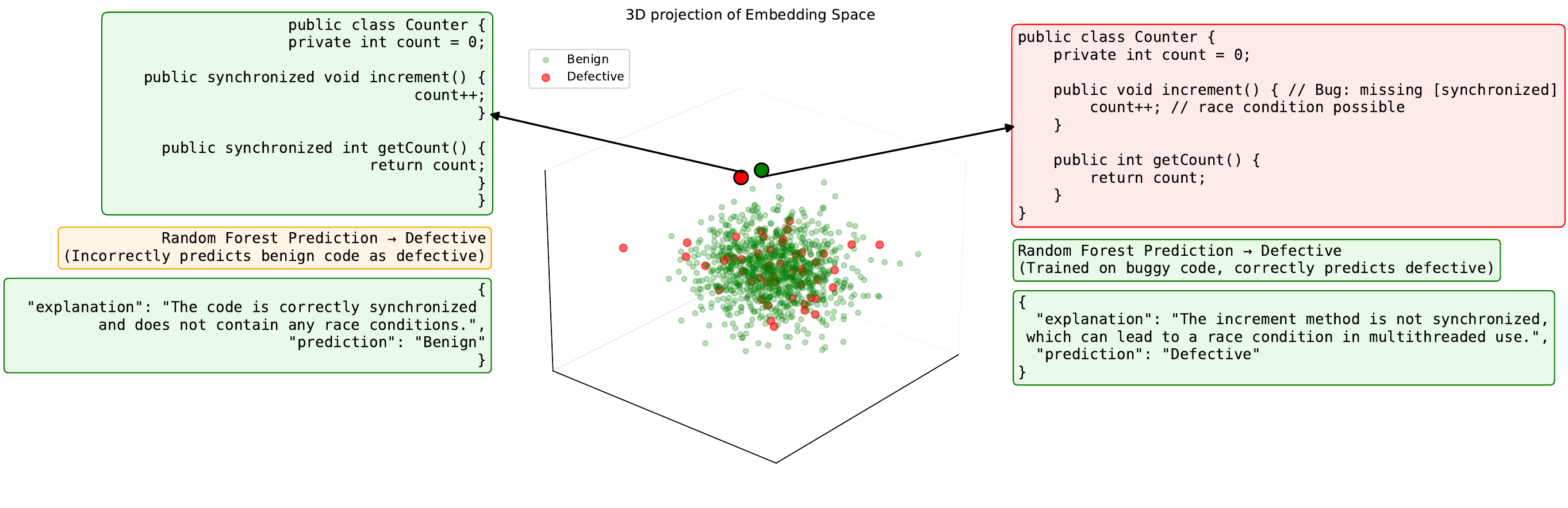}
\caption{3D projection of embeddings for benign and defective variants of a source code. Despite a race-condition edit, both map nearby, showing why embedding-based models trained on historical files can misclassify, while even a simple single-version classification by LLMs (see \autoref{fig:prompt_m0}) correctly predicts both instances.}
\label{fig:embedding_projection}
\end{figure*}

\subsection{Rethinking Traditional File-Level SDP: Why Illusion of Accuracy Occurs}
\label{sec:Rethinking_SDP}
Let $(x_1, x_2)$ be an evolution pair with defect labels $(y_1, y_2)$. Consider a change $x_2 = f_\epsilon(x_1)$ that flips the defect label while introducing only a minor edit. Traditional SDP models assume that such defect-relevant changes induce sufficiently large movements in embedding space to cross the learned decision boundary. However, as illustrated in \autoref{fig:embedding_projection}, even changes that introduce critical defects can often result in negligible embedding shifts. We show the most critical label-transition scenarios across file evolution in \autoref{fig:multi_fig}.


\begin{figure*}[!htbp]

\centering

\begin{minipage}{0.2375\textwidth}
    \includegraphics[width=\textwidth]{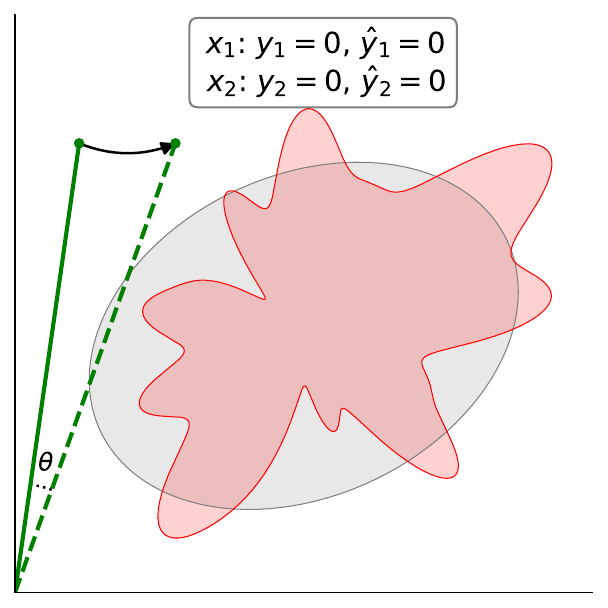}
    \centering \raisebox{0pt}{\textbf{(a)}} 
\end{minipage}\hfill
\begin{minipage}{0.2375\textwidth}
    \includegraphics[width=\textwidth]{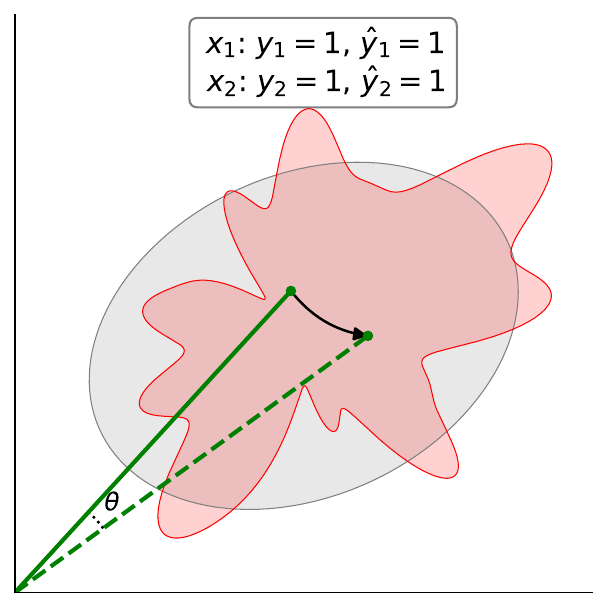}
    \centering \raisebox{0pt}{\textbf{(b)}}
\end{minipage}\hfill
\begin{minipage}{0.2375\textwidth}
    \includegraphics[width=\textwidth]{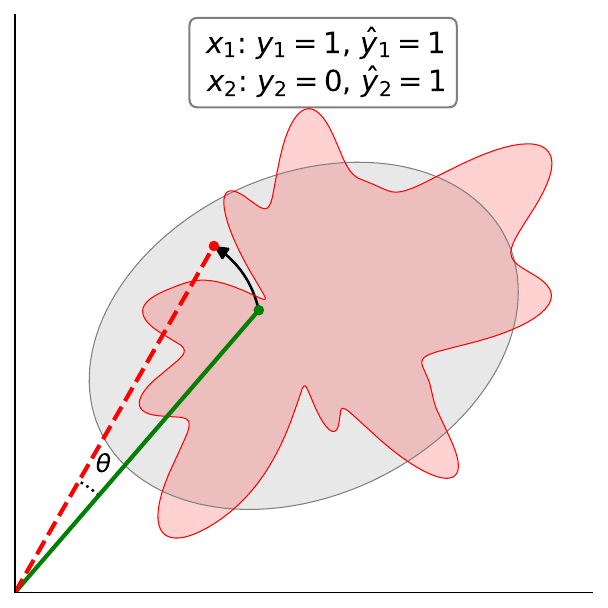}
    \centering \raisebox{0pt}{\textbf{(c)}}
\end{minipage}\hfill
\begin{minipage}{0.2375\textwidth}
    \includegraphics[width=\textwidth]{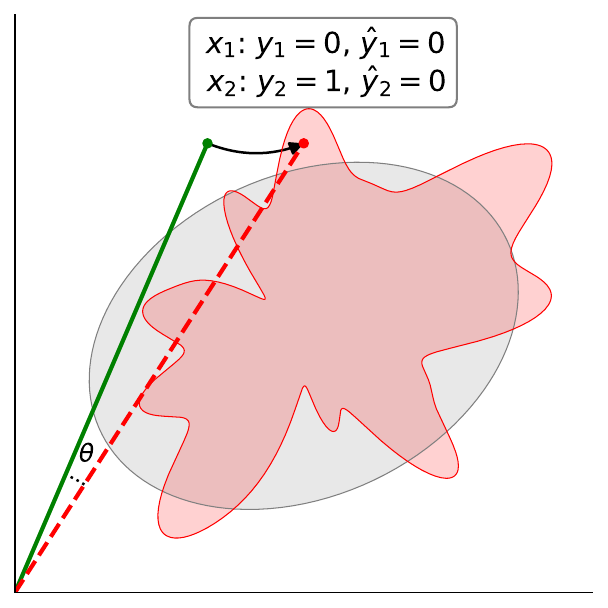}
    \centering \raisebox{0pt}{\textbf{(d)}}
\end{minipage}

\vspace{0.2cm}

\begin{minipage}{0.2375\textwidth}
    \includegraphics[width=\textwidth]{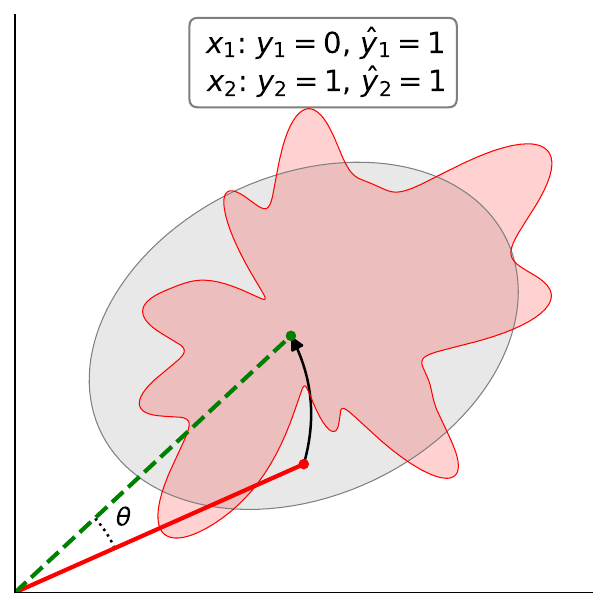}
    \centering \raisebox{0pt}{\textbf{(e)}}
\end{minipage}\hfill
\begin{minipage}{0.2375\textwidth}
    \includegraphics[width=\textwidth]{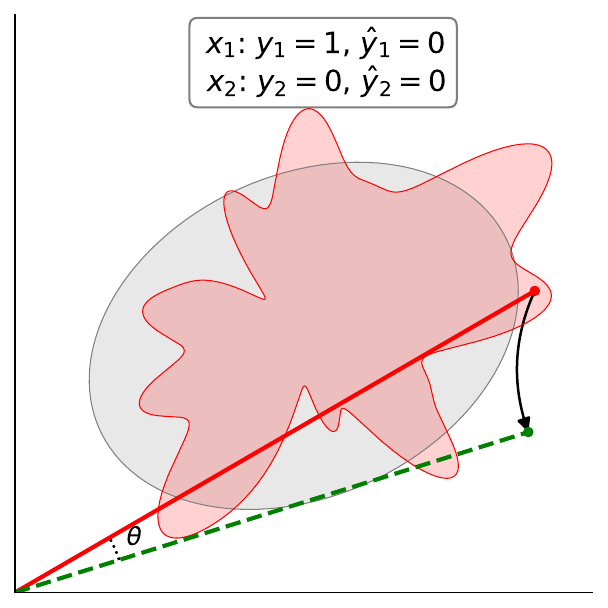}
    \centering \raisebox{0pt}{\textbf{(f)}}
\end{minipage}\hfill
\begin{minipage}{0.2375\textwidth}
    \includegraphics[width=\textwidth]{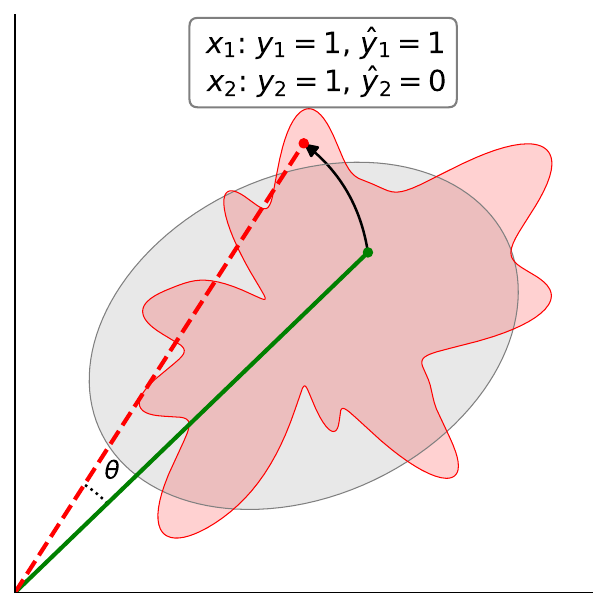}
    \centering \raisebox{0pt}{\textbf{(g)}}
\end{minipage}\hfill
\begin{minipage}{0.2375\textwidth}
    \includegraphics[width=\textwidth]{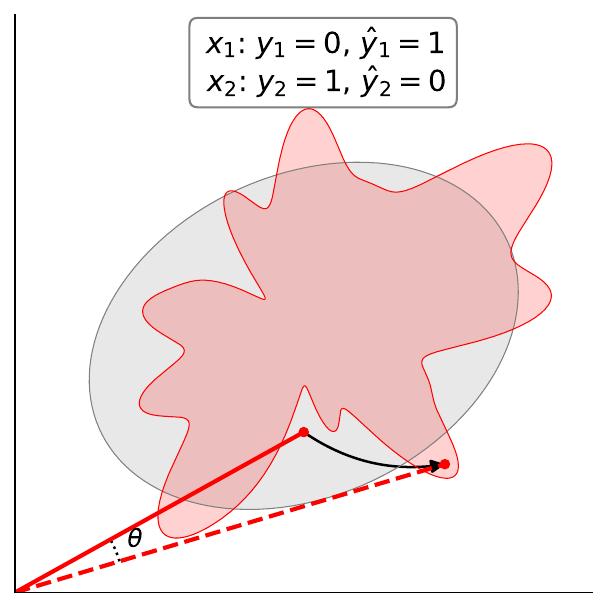}
    \centering \raisebox{0pt}{\textbf{(h)}}
\end{minipage}
\caption{Illustration of embedding-based defect prediction under file evolution. The red irregular region denotes the true defective boundary, while the gray elliptical region represents the estimated decision boundary. Each subfigure depicts an evolution pair $(x_1, x_2)$ with ground-truth labels $(y_1, y_2)$ and predicted labels $(\hat{y}_1, \hat{y}_2)$. (a--b) Label-persisted cases (B00, D11). (c--d) Label-changing cases (B10, D01). (e--f) \emph{Wrong in being right} phenomenon. (g) Benign-class bias. (h) Weak models misclassifying both versions. Other cases are omitted due to symmetry or low frequency.}
\label{fig:multi_fig}
\end{figure*}

\paragraph{High Unchanged-F1 and Low changed-F1} \autoref{fig:multi_fig}\hyperref[fig:multi_fig]{~(a)} and \autoref{fig:multi_fig}\hyperref[fig:multi_fig]{~(b)} show how this geometry favors label-persisted evolution pairs (B00 and D11). 
Because embeddings change little across versions, predictions are naturally reused, inflating unchanged-set performance (as seen in Section~\ref{sec:RQ1}). In contrast, \autoref{fig:multi_fig}\hyperref[fig:multi_fig]{~(c)} and \autoref{fig:multi_fig}\hyperref[fig:multi_fig]{~(d)} show how the same logic degrades performance on label-changing subsets (B10 and D01).

\paragraph{Wrong in Being Right} \autoref{fig:multi_fig}\hyperref[fig:multi_fig]{~(e)} and \autoref{fig:multi_fig}\hyperref[fig:multi_fig]{~(f)} reveal a more subtle but critical mode, which we call \emph{wrong in being right}. When a classifier learns an incorrect decision boundary around $x_1$, this error can be subtly propagated to $x_2$ due to high embedding similarity. \textbf{If the true label flips across versions, the propagated error is evaluated as correct}. This issue is evident in \autoref{tab:lucene_changed_correctness}. 

\begin{table}[!htbp]
\centering
\renewcommand{\arraystretch}{1.5} 
\setlength{\tabcolsep}{12pt} 
\caption{Train–test correctness matrices for the weak linear SVM model on B10 and D01 subsets of common files. Changed-F1 scores are: 0.551 (train), 0.381 (test), and 0.0 (test without propagated errors).}
\begin{tabular}{|c|c|c|c|}
\hline
Subset & Train/Test & $\hat{y}_2 = y_2$ & $\hat{y}_2 \neq y_2$ \\
\hline
\multirow{2}{*}{\textbf{B10 (1→0)}} & $\hat{y}_1 = y_1$ & 0  & 22 \\
\cline{2-4}
 & $\hat{y}_1 \neq y_1$ & \textbf{44} & 0 \\
\hline
\multirow{2}{*}{\textbf{D01 (0→1)}} & $\hat{y}_1 = y_1$ & 0  & 16 \\
\cline{2-4}
 & $\hat{y}_1 \neq y_1$ & \textbf{9} & 0 \\
\hline
\end{tabular}
\label{tab:lucene_changed_correctness}
\end{table}

\subsection{Why Change-Aware Formulation is Necessary}
\label{sec:justification_change_aware}

Consider the traditional SDP formulation that uses only the current file $X_{t+1}$ for prediction. 
\autoref{thm:label-persistence} shows that the Bayes-optimal predictor collapses to the naive label-persistent classifier (\autoref{fig:naive_classifier}). Furthermore, a change-aware predictor that conditions on the differences $D$ is statistically well-posed (\autoref{thm:necessity-of-change}). Check Appendix~\ref{appendix:bayes_proof} and Appendix~\ref{appendix:necessity_change} for the proof.
\begin{theorem}[Bayes-Optimality of Naive Label-Persistent Classifier]
\label{thm:label-persistence}
Let $f^\star:\mathcal U \to \{0,1\}$ be the Bayes-optimal classifier:
\[
f^\star(x) = \arg\max_{y \in \{0,1\}} \mathbb P(Y_{t+1}=y \mid X_{t+1}=x).
\]
Suppose label changes occur with small probability: $\mathbb P(Y_{t+1} \neq Y_t) = \varepsilon \in (0, \tfrac12)$ \text{(Empirically shown in Section \ref{sec:RQ1})}.

Then, with probability at least $1-\varepsilon$,
\[
f^\star(X_{t+1}) = Y_t.
\]
\end{theorem}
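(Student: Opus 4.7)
The plan is to exploit the implicit file-overlap structure behind the naive classifier in Fig.~\ref{fig:naive_classifier}. Under this structure, an instance $X_{t+1}$ effectively identifies its predecessor in $V_t$ (via path matching or the similarity-based alignment of Algorithm~\ref{alg:file_matching}), so $Y_t$ can be regarded as a deterministic function $\phi$ of $X_{t+1}$. I would begin by making this explicit, writing $Y_t = \phi(X_{t+1})$ and treating $\phi$ as a valid measurable predictor of $Y_{t+1}$.

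Next I would invoke the Bayes-optimality of $f^\star$ against the label-persistent predictor $\phi$. Since $\phi$ only misclassifies $Y_{t+1}$ precisely when the label actually flips,
\[
\mathbb{P}\bigl(\phi(X_{t+1}) \neq Y_{t+1}\bigr) \;=\; \mathbb{P}(Y_t \neq Y_{t+1}) \;=\; \varepsilon,
\]
and Bayes-optimality gives $\mathbb{P}(f^\star(X_{t+1}) \neq Y_{t+1}) \leq \varepsilon$. I would then write out the posterior explicitly: conditional on $X_{t+1}=x$, the quantity $q(x) := \mathbb{P}(Y_{t+1} \neq \phi(x) \mid X_{t+1}=x)$ is a local flip probability with $\mathbb{E}[q(X_{t+1})] = \varepsilon$, and the Bayes rule chooses $f^\star(x) = \phi(x)$ exactly when $q(x) < 1/2$. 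Hence
\[
\{f^\star(X_{t+1}) \neq Y_t\} \;=\; \{q(X_{t+1}) > 1/2\}.
\]

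A clean Markov-type bound then finishes the argument: $\mathbb{P}(q(X_{t+1}) > 1/2) \leq 2\,\mathbb{E}[q(X_{t+1})] = 2\varepsilon$. The main obstacle I anticipate is that this gives $\mathbb{P}(f^\star(X_{t+1}) = Y_t) \geq 1 - 2\varepsilon$, whereas the stated bound is $1 - \varepsilon$. I would close the gap by either (i) strengthening the hypothesis to the conditional version $\mathbb{P}(Y_{t+1} \neq Y_t \mid X_{t+1}) \leq \varepsilon$ a.s.\ (natural under local label stability), which forces $q \leq \varepsilon < 1/2$ everywhere and yields $f^\star \equiv \phi = Y_t$ identically, or (ii) appealing to the empirical observation in Section~\ref{sec:RQ1} that the label-flip mass is concentrated on a vanishing fraction of common files, so the $q(x) > 1/2$ region has negligible probability. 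Either refinement recovers the claim; the cleaner statement, and the one I would adopt in the proof, is the conditional-a.s.\ route, since it removes the factor of two without relying on dataset-specific statistics.
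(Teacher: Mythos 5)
Your proposal is essentially correct and, after your refinement (i), lands on the same mechanism as the paper's proof: both arguments reduce to showing that the posterior persistence probability $\mathbb P(Y_{t+1}=Y_t\mid X_{t+1}=x)$ exceeds $\tfrac12$ pointwise, which forces the Bayes rule to copy the previous label. The paper gets there by a different decomposition: it uses the (assumed invertible) evolution map $h(X_t,D)=X_{t+1}$ to split the posterior over $p_0=\mathbb P(D=\emptyset\mid X_{t+1}=x)$, assigning persistence probability $1$ to unchanged files and at least $1-\varepsilon$ to changed ones, giving $\mathbb P(Y_{t+1}=Y_t\mid X_{t+1}=x)\ge 1-(1-p_0)\varepsilon\ge 1-\varepsilon>\tfrac12$ for every $x$. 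Crucially, this step silently upgrades the stated marginal hypothesis $\mathbb P(Y_{t+1}\neq Y_t)=\varepsilon$ to a conditional one, $\mathbb P(Y_{t+1}\neq Y_t\mid X_t=u,D=d)\le\varepsilon$ for all changed $(u,d)$ --- exactly your route (i). Your Markov-type analysis is the genuinely valuable addition here: it shows that under the hypothesis as literally written, only $\mathbb P(f^\star(X_{t+1})=Y_t)\ge 1-2\varepsilon$ is provable, and your implicit counterexample (flip mass $q(x)=\tfrac12+\eta$ concentrated on a set of measure $\approx 2\varepsilon$) shows the factor of two is not an artifact. So you correctly diagnosed, blind, that the theorem needs the conditional strengthening to deliver the stated constant --- and under that strengthening the conclusion in fact holds for every $x$, so the paper's ``with probability at least $1-\varepsilon$'' is an understatement of what its own argument proves. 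One small caution on your setup: writing $Y_t=\phi(X_{t+1})$ as a deterministic measurable function requires both the invertibility of the evolution map and that the previous label is determined by the previous source, which the paper assumes implicitly; it is worth stating explicitly, since without it the event $\{f^\star(X_{t+1})=Y_t\}$ is not even well defined pointwise.
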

\begin{theorem}[Necessity of Conditioning on Change]
\label{thm:necessity-of-change}
No predictor based solely on $X_{t+1}$ can consistently estimate defect
transition probabilities.
\end{theorem}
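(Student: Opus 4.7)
\begin{proofsketch}
The plan is to formalize ``consistently estimate defect transition probabilities'' as the task of recovering the conditional distributions
\[
p_{j}(x) \;=\; \mathbb P\bigl(Y_{t+1}=1 \,\bigm|\, Y_t=j,\; X_{t+1}=x\bigr),\qquad j\in\{0,1\},
\]
and then show that any predictor of the form $h:\mathcal U \to [0,1]$, which depends only on $X_{t+1}$, is fundamentally limited to a marginal mixture of these two quantities. By the law of total probability,
\[
\mathbb P(Y_{t+1}=1 \mid X_{t+1}=x) = \pi_0(x)\,p_0(x) + \pi_1(x)\,p_1(x),
\]
where $\pi_j(x) = \mathbb P(Y_t=j \mid X_{t+1}=x)$. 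The observation is that any $h$ that is measurable with respect to $X_{t+1}$ is a function of the left-hand side only, and therefore cannot separate the two transition components $p_0$ and $p_1$.

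Next, I would make the obstruction concrete through a non-identifiability counterexample. Fix a distribution over $X_{t+1}$ and construct two joint laws $\mathbb Q^{(1)},\mathbb Q^{(2)}$ over $(X_{t+1},Y_t,Y_{t+1})$ such that:
(i) they agree on the observable marginal $\mathbb P(X_{t+1},Y_{t+1})$, so any estimator based solely on $X_{t+1}$ yields the same output under both laws;
(ii) they disagree on at least one of the transition probabilities $p_0$ or $p_1$, in particular on the defect-introduction rate $p_0(x)$ relevant for the $\mathrm{D}01$ subset.
A simple construction is to perturb the pair $(\pi_0,p_0,p_1)$ by rebalancing mass between the benign-to-defective and defective-to-defective channels while keeping the mixture $\pi_0 p_0 + \pi_1 p_1$ fixed; because two unknowns determine one observable equation, there is an entire one-parameter family of such $(p_0,p_1)$, ruling out identifiability.

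Finally, I would translate non-identifiability into the statistical notion of consistency: since no measurable function of an i.i.d.\ sample $\{X_{t+1}^{(i)},Y_{t+1}^{(i)}\}$ can distinguish $\mathbb Q^{(1)}$ from $\mathbb Q^{(2)}$, by a standard two-point Le Cam argument the worst-case estimation error for $p_0(x)$ (or $p_1(x)$) is bounded below by a positive constant uniformly in the sample size $n$, hence no such estimator can be consistent over the class of laws satisfying Assumption~\ref{thm:label-persistence}. This directly yields the theorem.

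The main obstacle I anticipate is ensuring that the counterexample distributions remain consistent with the small-flip assumption $\mathbb P(Y_{t+1}\neq Y_t)=\varepsilon$ used in Theorem~\ref{thm:label-persistence}, so that the two results are compatible rather than contradictory; this requires choosing the perturbation in $(p_0,p_1)$ small enough that both $\mathbb Q^{(1)}$ and $\mathbb Q^{(2)}$ respect the same marginal flip rate, which is a mild but necessary bookkeeping constraint. The rest is routine measure-theoretic reasoning together with a standard Le Cam two-point lower bound.
\end{proofsketch}
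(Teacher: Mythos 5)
Your proposal is correct, but it takes a genuinely different and substantially more formal route than the paper. The paper's own proof treats the unobserved \emph{change} $D$ as the latent variable: it assumes the map $(X_t,D)\mapsto X_{t+1}$ is non-injective, concludes $D$ is not a measurable function of $X_{t+1}$, and then argues that $\mathbb P(Y_{t+1}\mid X_{t+1})$ is an average of $\mathbb P(Y_{t+1}\mid Y_t,D)$ over the change events consistent with $X_{t+1}$, which is ``generally different'' from any specific component --- an informal identifiability argument with no explicit counterexample and no statistical notion of consistency. You instead take the unobserved \emph{prior label} $Y_t$ as the latent variable, write the observable quantity as the mixture $\pi_0(x)p_0(x)+\pi_1(x)p_1(x)$, exhibit a one-parameter family of laws agreeing on the $(X_{t+1},Y_{t+1})$ marginal but disagreeing on $(p_0,p_1)$, and convert non-identifiability into a uniform-in-$n$ lower bound via Le Cam. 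What your version buys is rigor: an actual pair of indistinguishable distributions and a precise meaning for ``cannot consistently estimate,'' both of which the paper's one-line ``generally different'' elides. What the paper's version buys is alignment with its causal narrative (the change $D$ is the driver of transitions), which your decomposition over $Y_t$ does not directly speak to, though either formalization supports the theorem as stated. One small correction to your bookkeeping remark: if you hold the mixture $\pi_0 p_0+\pi_1 p_1$ fixed and perturb $p_0$, the marginal flip rate $\pi_0 p_0+\pi_1(1-p_1)$ necessarily changes, so the two laws cannot share \emph{exactly} the same $\varepsilon$; but Theorem~\ref{thm:label-persistence} only requires $\varepsilon\in(0,\tfrac12)$ for each law, so a sufficiently small perturbation keeps both in the admissible class and your argument goes through.
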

When distinct pairs $(X_t, D)$ lead to the same $X_{t+1}$, the change $D$ cannot be recovered from $X_{t+1}$ alone. Conditioning on $X_{t+1}$ therefore merges cases with and without changes. As a result, $\mathbb{P}(Y_{t+1} \mid X_{t+1})$ cannot capture the transition behavior driven by changes, and predictors that ignore $D$ cannot model defect transitions. Conditioning on the change $D$ separates \emph{what the file is} from \emph{how it got there}, allowing the model to focus on the causal event that drives label transitions.

\subsection{Comparative Analysis and Performance}
\label{sec:final_comparison}
We complement Section~\ref{sec:RQ3} by comparing all methods and baselines in \autoref{tab:compare-results}, summarizing the insights aimed to establish through the research questions proposed in Section~\ref{sec:research_questions}. The multi-agent debate framework significantly improves accuracy on the most challenging status-transition subset, namely D01. A useful guiding principle in the context of change-aware SDP is that
\begin{center}
\emph{an effective method, bypassing the illusion of accuracy, achieves high F1 in conjunction with high HMB and HMD.}
\end{center}

\begin{table*}[!htbp]
\centering
\scriptsize
\setlength{\tabcolsep}{6pt}
\renewcommand{\arraystretch}{1.1}
\caption{Performance comparison of methods and baselines on Lucene project. We use DeepSeek-Reasoning and Qwen2.5-Coder-32B for M0 and M5 respectively.
$^{*}$ For Random Forest, metrics are computed after excluding propagated errors explained in Section~\ref{sec:Rethinking_SDP}.
}
\label{tab:compare-results}

\resizebox{\textwidth}{!}{%
\begin{tabular}{l
cccc
cc
ccc}
\toprule
& \multicolumn{4}{c}{\textbf{Subset Accuracies}}
& \multicolumn{2}{c}{\textbf{Change-aware}}
& \multicolumn{3}{c}{\textbf{F1 Scores}} \\
\cmidrule(lr){2-5}
\cmidrule(lr){6-7}
\cmidrule(lr){8-10}
\textbf{Method}
& \textbf{B00} & \textbf{D01} & \textbf{D11} & \textbf{B10}& \textbf{HMB} & \textbf{HMD}
& \textbf{Changed} & \textbf{Unchanged} & \textbf{Total} \\
\midrule
Naive All Benign Prediction
& 1.00 & \cellcolor{red!20}0.00 & 0.00 & 1.00
& \cellcolor{red!30}0.00 & \cellcolor{red!30}0.00
& 0.42 & 0.49 & 0.48 \\
Naive Label-persistent Classifier
& 1.00  & \cellcolor{red!20}0.00 & 1.00 & 0.00
& \cellcolor{red!30}0.00  & \cellcolor{red!30}0.00
& 0.0  & 1.0  & 0.73  \\
\midrule
Random Forest*
& 1.00  & \cellcolor{red!20}0.00  & 0.68  & {0.04} 
& \cellcolor{red!30}0.00  & \cellcolor{red!30}{0.07} 
&{0.00}  & 0.90  & 0.70  \\
M0 (Single-version Classification)
& 0.99  & \cellcolor{red!20}0.00  & 0.37  & 0.86 
& \cellcolor{red!30}0.00  & \cellcolor{green!25}0.51 
& 0.49  & 0.49  & 0.50  \\
M5 (Diff-only Reasoning)
& 0.99  & \cellcolor{red!20}0.00  & 0.63 & 0.62 
& \cellcolor{red!30}0.00  & \cellcolor{green!30}0.62
& 0.31  & 0.85  & 0.70  \\
Multi-Agent Debate
& 0.71 & \cellcolor{green!20}0.52 & 0.80 & 0.18
& \cellcolor{green!35}0.60 & \cellcolor{green!20}0.30
& 0.27 & 0.52 & 0.51  \\
\bottomrule
\end{tabular}
}
\end{table*}

\subsection{Impact of Debate Rounds and Comprehensive Performance of Multi-Agent Debate Across PROMISE Datasets}
\autoref{tab:mad_overall_performance} shows that a single debate round is generally sufficient and exhibits dominant performance across PROMISE datasets. One-round debate consistently achieves stronger results on the B00 subset while maintaining often stable and competitive HMB and HMD scores, leading to a clear and significant advantage in Total F1. Additional rounds rarely close this gap and instead introduce variability without consistent overall gains.

\begin{table*}[t]
\centering
\caption{Impact of Debate Rounds on the Overall Performance of the Multi-Agent Debate Framework Across PROMISE Datasets. Best results per dataset and metric are shown in bold. $^{*}$ Zero accuracy on the B10 subset for Groovy is due to the extremely small size of only three instances.}
\label{tab:mad_overall_performance}
\resizebox{0.90\textwidth}{!}{%
\begin{tabular}{c c c c c c c c c c c}
\toprule
{Dataset} 
& {Debate} 
& \multicolumn{4}{c}{{Subset Accuracies}} 
& \multicolumn{2}{c}{{Change-aware}} 
& \multicolumn{3}{c}{{F1 Scores}} \\
\cmidrule(lr){3-6}
\cmidrule(lr){7-8}
\cmidrule(lr){9-11}
& {Rounds}
& {B00} 
& {D01} 
& {D11} 
& {B10} 
& {HMB} 
& {HMD} 
& {Changed} 
& {Unchanged} 
& {Total} \\
\midrule

\multirow{3}{*}{Lucene}
 & 1 & \textbf{0.72} & \underline{0.52} & \underline{0.80} & \underline{0.18} & \textbf{0.60} & \underline{0.30} & \underline{0.28} & \textbf{0.53} & \textbf{0.52} \\
 & 2 & 0.58 & 0.48 & \textbf{0.96} & 0.15 & 0.52 & 0.27 & 0.24 & 0.46 & 0.45 \\
& 3 & \underline{0.66} & \textbf{0.54} & 0.79 & \textbf{0.25} & \underline{0.59} & \textbf{0.38} & \textbf{0.33} & \underline{0.48} & \underline{0.49} \\
\midrule

\multirow{3}{*}{Groovy}
 & 1 & \textbf{0.56} & \textbf{0.50} & \underline{0.75} & 0.00* & \textbf{0.53} & 0.00* & \underline{0.28} & \textbf{0.43} & \textbf{0.46} \\
 & 2 & \underline{0.45} & \underline{0.40} & \textbf{0.88} & \textbf{0.33} & \underline{0.42} & \textbf{0.48} & \textbf{0.35} & \underline{0.39} & \underline{0.40} \\
 & 3 & {0.45} & 0.30 & 0.50 & \underline{0.33} & {0.36} & \underline{0.40} & 0.29 & 0.35 & 0.36 \\
\midrule

\multirow{3}{*}{Derby}
 & 1 & \textbf{0.67} & 0.32 & {0.77} & \textbf{0.23} & 0.43 & \textbf{0.35} & \textbf{0.21} & \textbf{0.61} & \textbf{0.50} \\
 & 2 & \underline{0.53} & \textbf{0.74} & \underline{0.83} & \underline{0.16} & \textbf{0.62} & \underline{0.27} & \underline{0.21} & \underline{0.53} & \underline{0.45} \\
 & 3 & 0.47 & \underline{0.50} & \textbf{0.95} & 0.15 & \underline{0.48} & {0.27} & 0.18 & 0.42 & 0.39 \\
\midrule

\multirow{3}{*}{HBase}
 & 1 & \textbf{0.75} & 0.29 & 0.72 & \textbf{0.29} & {0.42} & \textbf{0.42} & \textbf{0.29} & \textbf{0.57} & \textbf{0.51} \\
 & 2 & \underline{0.62} & \underline{0.42} & \underline{0.85} & 0.13 & \textbf{0.50} & {0.23} & 0.23 & \underline{0.51} & \underline{0.46} \\
 & 3 & 0.50 & \textbf{0.51} & \textbf{0.91} & \underline{0.18} & \underline{0.50} & \underline{0.30} & \underline{0.29} & {0.44} & 0.43 \\
\midrule
\multirow{3}{*}{JRuby}
 & 1 & \textbf{0.61} & 0.40 & \textbf{0.93} & \textbf{0.29} & 0.48 & \textbf{0.44} & \textbf{0.24} & \textbf{0.49} & \textbf{0.41} \\
 & 2 & 0.51 & \textbf{0.67} & 0.86 & \underline{0.12} & \textbf{0.58} & \underline{0.21} & \underline{0.15} & 0.43 & 0.34 \\
 & 3 & \underline{0.55} & \underline{0.60} & \underline{0.93} & 0.11 & \underline{0.57} & 0.20 & 0.13 & \underline{0.46} & \underline{0.36} \\
\midrule

\multirow{3}{*}{Camel}
& 1 & \textbf{0.76} & 0.30 & \underline{0.90} & \textbf{0.35} & 0.43 & \textbf{0.50} & \underline{0.32} & \textbf{0.47} & \textbf{0.48} \\
& 2 & 0.67 & \underline{0.48} & 0.85 & \underline{0.29} & \textbf{0.56} & \underline{0.43} & \textbf{0.35} & \underline{0.43} & \underline{0.44} \\
& 3 & \underline{0.60} & \textbf{0.49} & \textbf{0.95} & {0.18} & \underline{}{0.54} & 0.30 & {0.27} & {0.40} & {0.41} \\

\bottomrule
\end{tabular}
}
\end{table*}

\subsection{Time Complexity Analysis}
Figure~\ref{fig:time_complexity} shows the effects of parameters and code diff size on runtime complexity of multi-agent debate framework.

\begin{figure*}[!htbp]
    \centering

    \begin{minipage}[t]{0.48\textwidth}
        \centering
        \includegraphics[height=8.2cm]{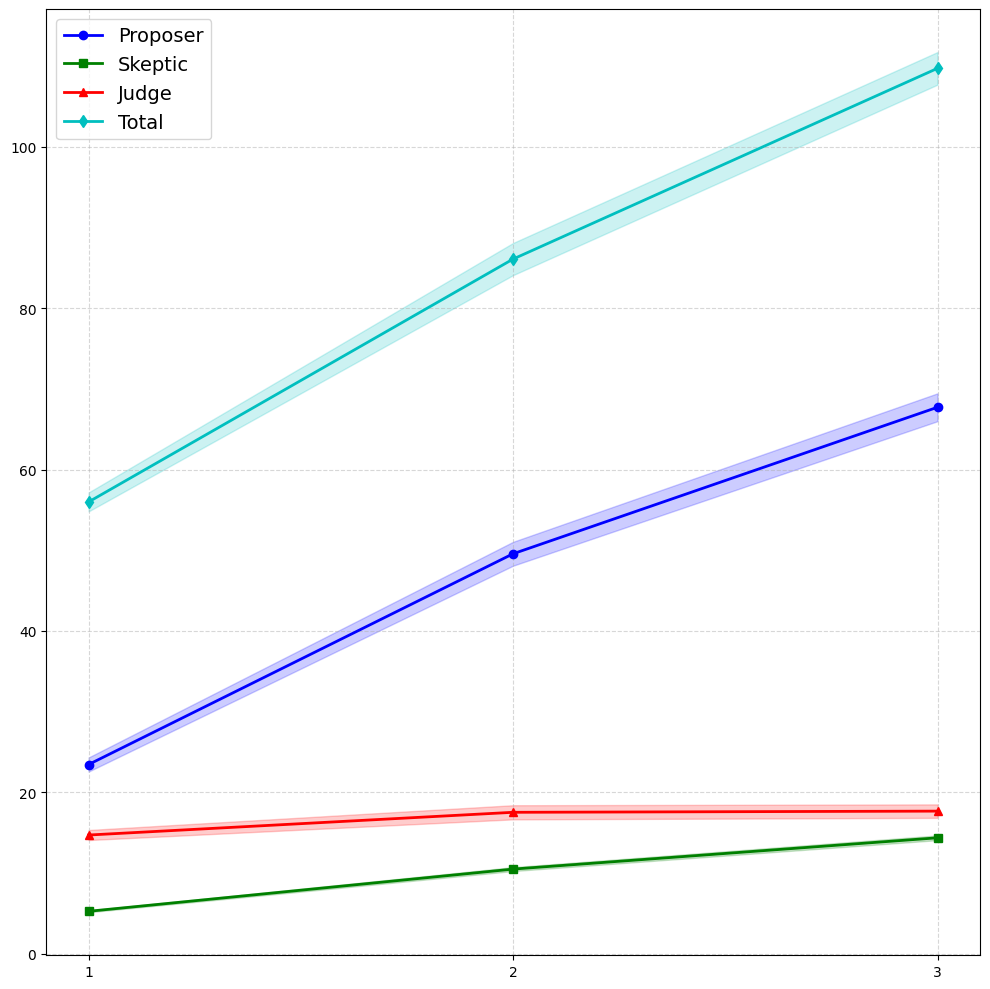}
        \vspace{0.2cm}
        
        \small (a) vs Number of Debates
    \end{minipage}
    \hfill
    \begin{minipage}[t]{0.48\textwidth}
        \centering
        \includegraphics[height=8.2cm]{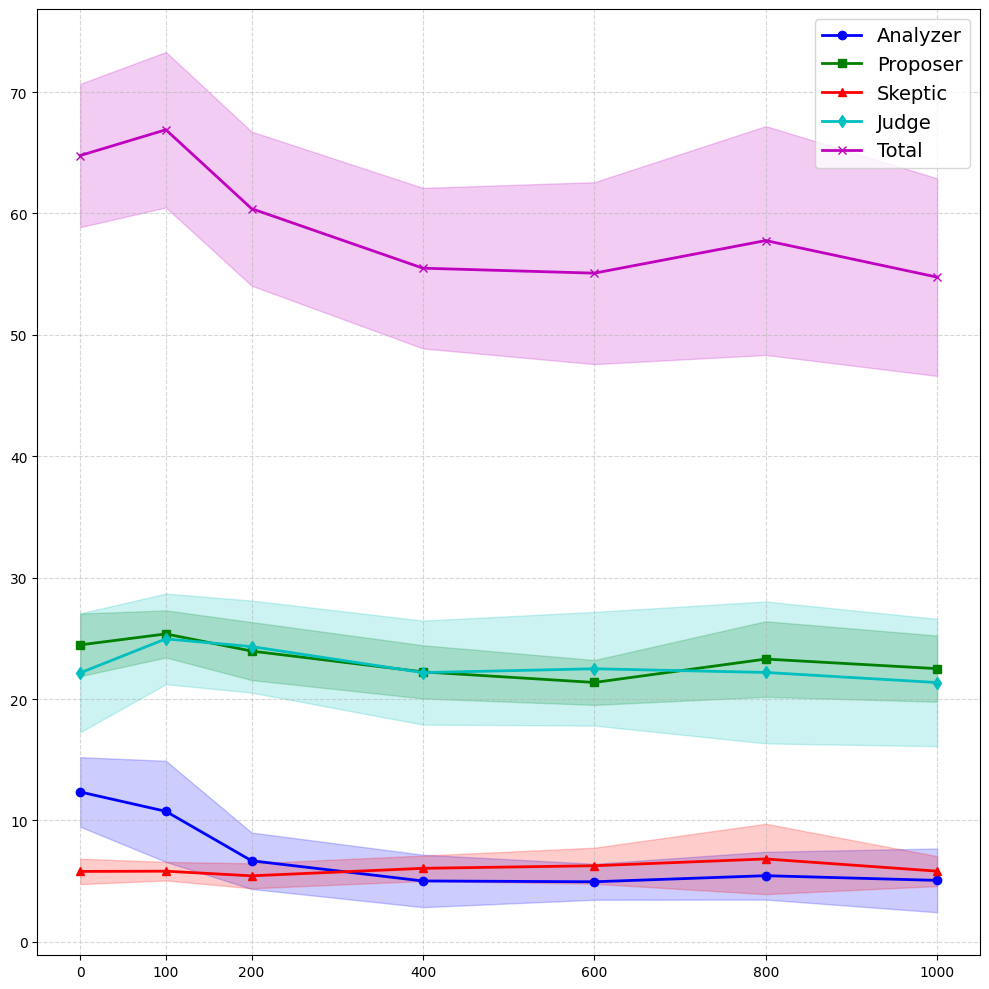}
        \vspace{0.2cm}
        
        \small (b) vs Max Lines
    \end{minipage}

    \vspace{0.35cm}

    \begin{minipage}[t]{0.95\textwidth}
        \centering
        \includegraphics[width=\textwidth]{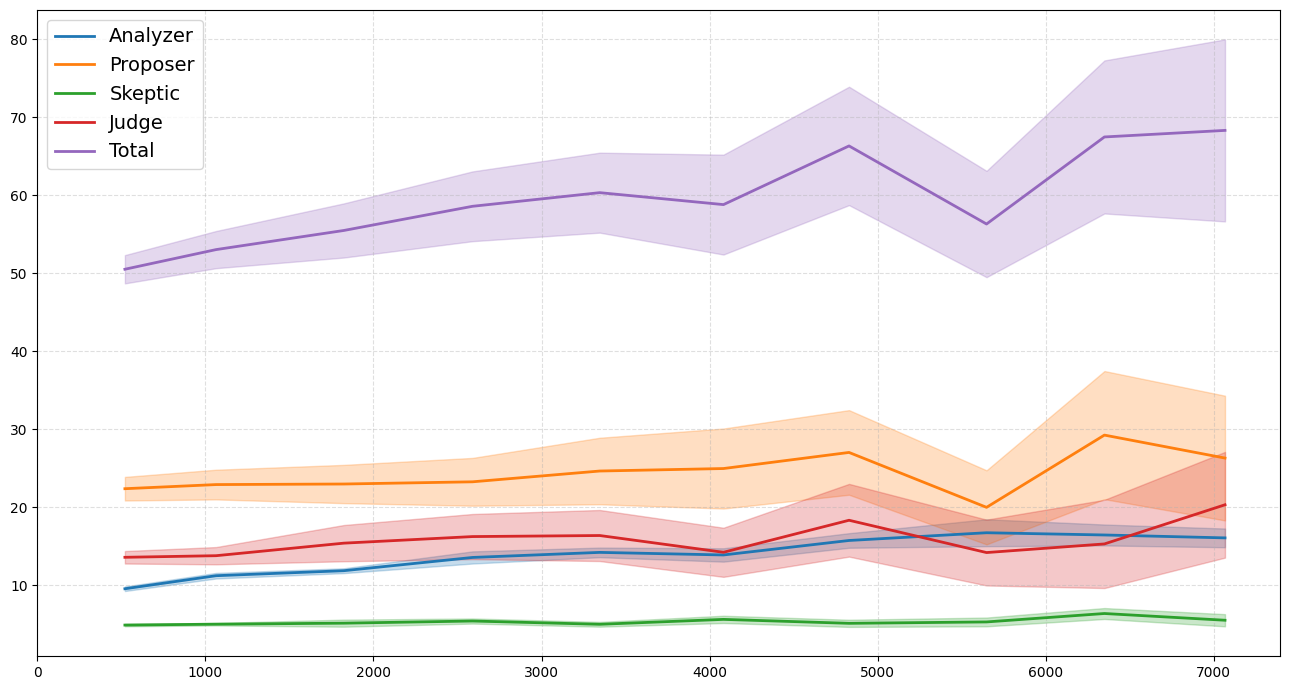}
        \vspace{0.2cm}
        
        \small (c) vs Source Code Change Size
    \end{minipage}

    \caption{
        Runtime complexity analysis of the multi-agent debate framework (average time in seconds). Shaded regions denote the 95\% confidence interval
    }
    \label{fig:time_complexity}
\end{figure*}

\subsection{Limitations}
The proposed change-aware formulation relies on access to the defect status of a file in the previous version. While this assumption naturally follows from the change-based setting and versioned repositories, it restricts applicability in scenarios where historical labels are missing or cannot be reliably annotated. 
Our problem formulation and evaluation is limited to common files, excluding newly added ones. The file-matching strategy assumes a one-to-one correspondence across versions and does not account for many-to-one or one-to-many evolution patterns, such as file merges and splits, which are therefore outside the scope of this work.
Additionally, the context expansion algorithm in Section~\ref{sec:subset_motivation_retrieval} ignores the importance of different context elements and does not distinguish more informative methods from less relevant ones.
Although this is still more informative than using the change in isolation, all retrieved context is treated uniformly, which can include redundant code or lead to the truncation of more informative parts under the \texttt{max\_lines} limit.

\section{Threats to Validity}
\label{sec:validity}
We discuss potential threats that may affect the credibility and generalizability of our results, grouped into internal, external, and construct validity.
\subsection{Internal Validity}
Internal validity concerns whether the observed effects stem from our methods rather than uncontrolled factors.  Our similarity-based file-matching may misalign files during major refactorings, influencing the evolution subsets (B00, D01, B10, D11). LLM baselines are sensitive to prompt design and hyperparameters; despite using standardized templates, minor variations may affect reasoning. The multi-agent debate framework mitigates hallucination but may still converge to incorrect consensus when both agents share similar biases.
\subsection{External Validity}
External validity relates to the generalizability of our findings. Experiments on PROMISE Java projects may not fully represent modern industrial settings such as microservices, polyglot systems, or rapid-release environments. Our file-level focus may differ from real-world defect management at commit, function, or line-level. Moreover, reliance on large language models may limit reproducibility in organizations with cost, privacy, or licensing constraints.
\subsection{Construct Validity}
Construct validity examines whether our design and metrics capture the intended concepts. Change-aware defect prediction depends on accurate status-transition labels; noise in PROMISE bug labels may affect subsets (B10 and D01). Changed-F1, 
Finally, the context expansion algorithm focuses on local caller-callee relations, which may overlook broader architectural dependencies in highly modular systems.
\section{Conclusion and Future Works}
\label{sec:conclusion}
Traditional single-file formulation of SDP is fundamentally misaligned with the realities of software evolution, and its reported success is largely driven by an \textit{"illusion of accuracy"} caused by label-persistence bias. By conditioning predictions on code changes, we demonstrated that a change-aware formulation is not only more informative but necessary to avoid pitfalls such as deceptively-high F1 score and the \textit{"wrong in being right"} phenomenon. Our analysis further revealed that a meaningful evaluation must itself be change-aware, leading us to metrics such as changed-F1 and unchanged-F1, subset-wise accuracies (B00,D01,D11,B10), and the corresponding harmonic means (HMB and HMD). 

We found that LLMs are less deceptive than traditional embedding-based approaches, which consistently fail on defect-introducing and defect-fixing cases despite high total F1. Nonetheless, single-agent LLMs still struggle with newly introduced defects (D01), a limitation that is effectively solved by our multi-agent debate framework, yielding more balanced and reliable performance across all evolution subsets. Together, these findings argue for a paradigm shift in SDP toward change-aware modeling and evaluation, and suggest that multi-agent debate framework is a promising direction for reliable change-aware software defect prediction.

Future work includes relaxing the reliance on previous defect labels to support change-aware prediction when historical annotations are missing or unreliable. Another important direction is extending the one-to-one file-evolution assumption to handle many-to-one and one-to-many changes, such as file merges and splits.
Additionally, the multi-agent debate framework, while unaffected by the illusion of accuracy, still falls far short of practical performance, leaving considerable room for improvement despite its high complexity. It is valuable to explore alternative approaches with less computational cost and improved performance.
Possible directions include better context expansion strategies to prioritize the most informative code elements and critical features. Another avenue is designing a multi-agent framework with a possibly simpler yet more efficient information flow. It is also helpful to create benchmarks that capture a wider variety of code-change patterns for each file. Perhaps, with a sufficiently large and rich dataset on code changes, simpler models may move beyond illusion and progress toward generating genuine insight.





\bibliographystyle{ieeetran}

\bibliography{cas-refs}

\pagebreak
\appendix

\subsection{Unified Diff Example}
\label{appendix:unified_diff_exp}
See \autoref{fig:unified_diff_example}

\begin{figure}[h]
    \centering
    \begin{adjustbox}{width=\columnwidth}
        \begin{lstlisting}[style=diff,basicstyle=\ttfamily\scriptsize]
--- a/DefectExample.java
+++ b/DefectExample.java
@@ -3,10 +3,10 @@ public class DefectExample {
     public static int calculateSum(int[] arr) {
         int total = 0;
-        // Defective loop condition
-        for (int i = 0; i <= arr.length; i++) {
+        // Corrected loop condition
+        for (int i = 0; i < arr.length; i++) {
             total += arr[i];
         }
         return total;
     }
 }
        \end{lstlisting}
    \end{adjustbox}
    \caption{Example of a unified diff patch highlighting a single-line bug fix}
    \label{fig:unified_diff_example}
\end{figure}

\subsection{Parameters and Configuration of Models and Algorithms}
\label{appendix:parameters_config}

See \autoref{tab:experimental_settings} for parameters and configurations.

\textbf{Resources:} Experiments were conducted using Kaggle Notebooks with free cloud compute. The environment provided up to 12 hours per session, 20 GB disk space, approximately 30 GB RAM, and access to two NVIDIA Tesla T4 GPUs. The GPUs were used exclusively for CodeBERT embedding computation; all other components ran on CPU.

\begin{table}[h]
\centering
\caption{Hierarchical overview of versions and parameters used in the experimental setting. All parameters not explicitly listed are left at their default values.}
\label{tab:experimental_settings}
\footnotesize
\setlength{\tabcolsep}{3pt}
\renewcommand{\arraystretch}{1.05}
\begin{tabularx}{0.40\textwidth}{lX}
\toprule
\textbf{Component / Parameter} & \textbf{Value} \\
\midrule
\textit{Execution environment} & \\

\quad Python version & 3.11.13 (GCC 11.4.0) \\
\quad Notebook version & 6.5.4 \\
\quad JupyterLab version & 3.6.8 \\

\textit{Classification models} & \\

\quad \textit{Logistic Regression} & \\
\qquad Library & \textsc{scikit-learn} \\
\qquad Solver & "liblinear" \\
\qquad Max iterations & 2000 \\

\quad \textit{Lasso} & \\
\qquad Penalty & "l1" \\
\qquad Solver & "liblinear" \\
\qquad Max iterations & 2000 \\
\qquad Class weighting & "balanced" \\

\quad \textit{Random Forest} & \\
\qquad Number of estimators & 200 \\
\qquad Random seed & 42 \\
\qquad Class weighting & "balanced" \\

\quad \textit{Linear SVM} & \\
\qquad Kernel & "linear" \\
\qquad Probability estimates & True \\
\qquad Class weighting & "balanced" \\

\midrule
\textit{Code representations} & \\

\quad \textit{CodeBERT} & \\
\qquad Pretrained model & "microsoft/codebert-base" \\
\qquad Tokenizer & RobertaTokenizer \\
\qquad Encoder & RobertaModel \\
\qquad Token truncation & True \\
\qquad Maximum seq length & 512 \\
\qquad Padding strategy & "max\_length" \\
\qquad Tensor format & "pt" \\
\qquad dimension & 768 \\

\quad \textit{Long-file handling} & \\
\qquad Chunking method & \textsc{tiktoken} \\

\quad \textit{API-based embeddings} & \\
\qquad Embedding model & OpenAI text-embedding-3-small \\
\qquad dimension & 1536 \\

\midrule
\textit{LLM component} & \\
\quad Library & \textsc{openai}, \textsc{requests} \\
\quad LLM provider & llm7\footnotemark \\
\quad Temperature & 0.0 \\
\quad Base URL & "https://api.llm7.io/v1"\\

\midrule
\textit{Context Expansion} & \\

\quad depth & 3 \\
\quad max\_lines & 400 (default), 0, 100, 200, 600 (final), 800, 1000 \\

\bottomrule
\end{tabularx}
\end{table}

\footnotetext{
\texttt{llm7} is a third-party LLM aggregation platform (\url{https://llm7.io/}) that provides access to multiple large language models through a unified API. The platform is open-source and publicly available at \url{https://github.com/chigwell/llm7.io}.}

\subsection{Proof of Probabilistic Correctness of File Matching}
\label{appendix:matching_proof}

Fix a new-version file $x_2$ and let
$D_{\mathrm{old}}=\{z_1,\dots,z_m\}$ denote the set of old-version files.
For each $z\in D_{\mathrm{old}}$, define the Dice similarity
\[
S(z)=\frac{2|L(x_2)\cap L(z)|}{|L(x_2)|+|L(z)|}\in[0,1].
\]
Let $s_1\ge s_2\ge\cdots\ge s_m$ denote the ordered similarities, and define
\[
g_i = s_i - s_{i+1}, \qquad i=1,\dots,m-1.
\]

\paragraph{Concentration of similarity scores.}
For a fixed $z\in D_{\mathrm{old}}$, write
\[
|L(x_2)\cap L(z)|
=
\sum_{\ell\in L(x_2)} \mathbf{1}[\ell\in L(z)],
\]
where the indicator variables are independent Bernoulli random variables.
Hence $S(z)$ is a normalized sum of independent Bernoulli variables.

By Hoeffding’s inequality, there exists a constant $\kappa>0$ such that for all
$\epsilon>0$,
\begin{equation}
\Pr\!\left(|S(z)-\mathbb{E}[S(z)]|\ge\epsilon\right)
\;\le\;
2\exp(-\kappa \epsilon^2 n),
\label{eq:concentration}
\end{equation}
where $n=\min(|L(x_2)|,|L(z)|)$.

This exponential tail bound allows uniform control over all similarities using
union bounds.

\begin{theorem}
For any $\delta>0$, there exist constants $T\in(0,1)$ and $c\ge1$ such that
Algorithm~\ref{alg:file_matching} satisfies:
\begin{enumerate}
\item if a genuine predecessor exists, it is identified with probability at
      least $1-\delta$;
\item if no predecessor exists, the probability of declaring a match is at most
      $\delta$.
\end{enumerate}
\end{theorem}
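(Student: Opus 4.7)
The plan is to make the \emph{mild assumptions} quantitative by positing a separation in expected similarity: there exist $0 \le \mu_L < \mu_H \le 1$ such that $\mathbb{E}[S(z)] \ge \mu_H$ whenever $z$ is a genuine predecessor of $x_2$ (since an evolved file retains most of its ancestor's lines), and $\mathbb{E}[S(z)] \le \mu_L$ for any unrelated $z$ (since incidental overlap of tokenized lines between unrelated Java files is small). I would pick the threshold $T \in (\mu_L,\mu_H)$ strictly inside this gap, e.g.\ $T=(\mu_L+\mu_H)/2$, and then select a slack parameter $\epsilon \in (0,(\mu_H-\mu_L)/8)$. The overall strategy is to show that, on a high-probability \emph{good event} $\mathcal{E}$ where every empirical similarity $S(z)$ lies within $\epsilon$ of its mean, the algorithm behaves deterministically in both the predecessor-present and predecessor-absent cases; the probabilistic content is then a single union bound.

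Quantitatively, I would apply the concentration bound~(\ref{eq:concentration}) to each of the $m$ candidates and take a union bound, obtaining $\Pr(\mathcal{E}^c) \le 2m\exp(-\kappa\epsilon^2 n_{\min})$, where $n_{\min}$ is a lower bound on $\min(|L(x_2)|,|L(z)|)$ across candidates. Requiring $n_{\min} \ge \kappa^{-1}\epsilon^{-2}\log(2m/\delta)$ makes $\Pr(\mathcal{E}^c) \le \delta$. On $\mathcal{E}$: if a true predecessor $z^\star$ exists, then $s_1 = S(z^\star) \ge \mu_H - \epsilon > T$ while $s_2 \le \mu_L + \epsilon < T$, so $s_1 - s_2 \ge (\mu_H-\mu_L) - 2\epsilon$; all remaining gaps $g_i$ for $i \ge 2$ lie in $[0,2\epsilon]$, giving $\mu_g \le 2\epsilon$ and $\sigma_g \le 2\epsilon$. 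Any $c$ with $c \le (\mu_H-\mu_L)/(2\epsilon) - 2$ then satisfies $s_1 - s_2 \ge \mu_g + c\sigma_g$; the choice $\epsilon \le (\mu_H-\mu_L)/8$ admits $c \in [1,2]$, consistent with the stated constraint $c \ge 1$. Conversely, if no predecessor exists, then every $s_i \le \mu_L + \epsilon < T$, so the threshold test alone forces the algorithm to output \textbf{Null}.

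The main technical obstacle is the gap test, because $\mu_g$ and $\sigma_g$ are functions of sorted order statistics and could, a priori, be inflated by a single anomalous similarity, making the test unpredictable. The concentration event $\mathcal{E}$ is what resolves this: by squeezing \emph{all} unrelated similarities into a window of width $2\epsilon$, it bounds every internal gap deterministically by $2\epsilon$ regardless of how the sort happens to arrange them, which in turn controls both $\mu_g$ and $\sigma_g$ uniformly. A secondary subtlety is the joint feasibility of $T$, $c$, and $\epsilon$: they are not independent, since the requirement $c \ge 1$ must be compatible with a tight enough concentration. This is handled in a fixed order, first choosing $\epsilon$ small enough to open a workable interval for $c$, then inflating $n_{\min}$ logarithmically in $m/\delta$ to drive $\Pr(\mathcal{E}^c)$ below $\delta$. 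Since line counts of real Java source files comfortably exceed this logarithmic threshold in practice, the probabilistic guarantees of both statements in the theorem follow simultaneously from the same good event.
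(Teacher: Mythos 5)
Your proposal follows essentially the same route as the paper's own proof: a Hoeffding-type concentration bound on each Dice similarity, a union bound over the $m$ candidates to define a good event on which every score lies within $\epsilon$ of its mean, a threshold $T$ placed inside the separation gap between predecessor and non-predecessor means, and the observation that all internal gaps are then at most $2\epsilon$ so that $\mu_g + c\,\sigma_g$ is dominated by $s_1 - s_2$ for $c$ near $1$. The only differences are cosmetic (midpoint threshold and an $(\mu_H-\mu_L)/8$ slack versus the paper's $T=\mu^*-3\epsilon$ and $6\epsilon$ separation), and your proof even shares the paper's implicit simplification that the gap statistics are controlled by the non-predecessor gaps alone.
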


\begin{proof}
Fix $\delta>0$. Choose $\epsilon>0$ such that
\begin{equation}
\mu^*-\mu_0 \ge 6\epsilon,
\label{eq:separation}
\end{equation}
and such that
\begin{equation}
2m\exp(-\kappa\epsilon^2 n) \le \frac{\delta}{2}.
\label{eq:union_choice}
\end{equation}

For instance, one can take
\[
\epsilon = \max\Bigg\{\frac{\mu^*-\mu_0}{6}, \;\; \sqrt{\frac{1}{\kappa n} \ln \frac{4 m}{\delta}} \Bigg\}.
\]
Notice that larger $m$ requires a larger $\epsilon$ to keep the union-bound probability small, while larger $n$ allows smaller $\epsilon$ because the concentration improves with more lines.

Define the threshold
\[
T := \mu^* - 3\epsilon.
\]

\paragraph{Case 1: A true predecessor exists.}
Let $x_1^*$ denote the true predecessor of $x_2$.

For each $z\in D_{\mathrm{old}}$, by \eqref{eq:concentration},
\[
\Pr\!\left(|S(z)-\mathbb{E}[S(z)]|\ge\epsilon\right)
\le
2\exp(-\kappa\epsilon^2 n).
\]
Applying the union bound over all $m$ files,
\begin{align*}
&\Pr\!\left(
\exists z\in D_{\mathrm{old}} :
|S(z)-\mathbb{E}[S(z)]|\ge\epsilon
\right) \\
&\le
\sum_{z\in D_{\mathrm{old}}}
\Pr\!\left(|S(z)-\mathbb{E}[S(z)]|\ge\epsilon\right) \\
&\le
2m\exp(-\kappa\epsilon^2 n)
\;\le\;
\frac{\delta}{2}.
\end{align*}
Therefore, with probability at least $1-\delta/2$,
\begin{equation}
S(x_1^*) \ge \mu^*-\epsilon,
\qquad
S(z) \le \mu_0+\epsilon
\quad \forall z\neq x_1^*.
\label{eq:good_event}
\end{equation}

On this event,
\[
s_1 = S(x_1^*), \qquad s_2 = \max_{z\neq x_1^*} S(z),
\]
and hence
\begin{equation}
s_1 \ge \mu^*-\epsilon \ge T,
\label{eq:s1_threshold}
\end{equation}
and
\begin{equation}
s_1-s_2
\ge (\mu^*-\epsilon)-(\mu_0+\epsilon)
= \mu^*-\mu_0-2\epsilon
\ge 4\epsilon,
\label{eq:top_gap}
\end{equation}
where the last inequality follows from \eqref{eq:separation}.

Now consider gaps $g_i$ for $i\ge2$.
For each such $i$, both $s_i$ and $s_{i+1}$ correspond to non-predecessor files.
Applying \eqref{eq:concentration} to all such similarities and using a union
bound over at most $m$ terms, we obtain that with probability at least
$1-\delta/2$,
\[
|s_i-\mathbb{E}[s_i]|\le\epsilon,
\qquad
|s_{i+1}-\mathbb{E}[s_{i+1}]|\le\epsilon
\quad \forall i\ge2.
\]
Consequently,
\begin{equation}
g_i \le 2\epsilon
\qquad \forall i\ge2.
\label{eq:small_gaps}
\end{equation}

Thus the empirical gap statistics satisfy
\[
\mu_g \le 2\epsilon,
\qquad
\sigma_g \le 2\epsilon.
\]
Choosing $c=1$, inequality \eqref{eq:top_gap} yields
\[
s_1-s_2 \ge \mu_g + c\sigma_g.
\]
Hence the algorithm returns the correct predecessor with probability at least
$1-\delta$.

\paragraph{Case 2: No predecessor exists.}
In this case, $\mathbb{E}[S(z)]\le\mu_0$ for all $z\in D_{\mathrm{old}}$.
Applying \eqref{eq:concentration} and a union bound over all $m$ files,
\begin{align*}
\Pr\!\left(s_1 > \mu_0+\epsilon\right)
&=
\Pr\!\left(\exists z\in D_{\mathrm{old}} :
S(z)-\mathbb{E}[S(z)]>\epsilon\right) \\
&\le
2m\exp(-\kappa\epsilon^2 n)
\;\le\;
\frac{\delta}{2}.
\end{align*}
Therefore, with probability at least $1-\delta/2$,
\begin{equation}
s_1 \le \mu_0+\epsilon < T,
\label{eq:null_case}
\end{equation}
and the algorithm returns \textbf{Null}.

A false match requires
\[
S(z)-\mathbb{E}[S(z)] \ge \mu^*-\mu_0-3\epsilon \ge 3\epsilon
\]
for some $z\in D_{\mathrm{old}}$.
By Hoeffding’s inequality and a union bound,
\[
\Pr(\text{false match})
\le
2m\exp(-\kappa\epsilon^2 n)
\le
\delta.
\]
\end{proof}

\subsection{Proof of \autoref{thm:label-persistence} (Bayes-Optimality of Naive Label-Persistent Classifier)}
\label{appendix:bayes_proof}

\begin{proof}
Fix $x$ such that $\mathbb P(X_{t+1}=x) > 0$.  
By the law of total probability over $Y_t \in \{0,1\}$, we can write
\begin{align*}
&\mathbb P(Y_{t+1}=1 \mid X_{t+1}=x) \\
&= \mathbb P(Y_{t+1}=1 \mid Y_t=0, X_{t+1}=x) \cdot \mathbb P(Y_t=0 \mid X_{t+1}=x) \\
&\quad + \mathbb P(Y_{t+1}=1 \mid Y_t=1, X_{t+1}=x) \cdot \mathbb P(Y_t=1 \mid X_{t+1}=x).
\end{align*}

Since $X_{t+1} = h(X_t,D)$ deterministically and $h$ is invertible, conditioning on $X_{t+1}=x$ restricts $(X_t,D)$ to the set
\[
\mathcal S_x := \{(u,d) : h(u,d) = x\}.
\]

Now consider the probability of label persistence for each $(X_t,D) \in \mathcal S_x$:

\[
\mathbb P(Y_{t+1}=Y_t \mid X_t=u, D=d)=
\begin{cases}
1, & d=\emptyset,\\
\le 1-\varepsilon, & d\neq\emptyset.
\end{cases}
\]

Let $p_0 = \mathbb P(D = \emptyset \mid X_{t+1}=x)$ denote the probability that the file did not change. Then, using the law of total probability over $D$ conditional on $Y_t$ and $X_{t+1}$, we have
\begin{align*}
&\mathbb P(Y_{t+1} = Y_t \mid Y_t, X_{t+1}=x) \\
&= \sum_{d \in \mathcal D} \mathbb P(Y_{t+1} = Y_t \mid Y_t, D=d, X_{t+1}=x) \\ &\cdot\mathbb P(D=d \mid Y_t, X_{t+1}=x) \\
&\ge p_0 \cdot 1 + (1-p_0) \cdot (1-\varepsilon) \\
&= 1 - (1-p_0)\varepsilon \ge 1-\varepsilon,
\end{align*}
because $0 \le p_0 \le 1$.

Finally, averaging over $Y_t$ (via the first law of total probability step) gives
\[
\mathbb P(Y_{t+1} = Y_t \mid X_{t+1}=x) \ge 1-\varepsilon > \frac12.
\]

Therefore, the Bayes-optimal classifier chooses the previous label:
\[
f^\star(X_{t+1}) = Y_t \quad \text{with probability at least } 1-\varepsilon.
\]
\end{proof}
\subsection{Proof of \autoref{thm:necessity-of-change} (Necessity of Conditioning on Change)}
\label{appendix:necessity_change}
\begin{proof}
Assume there exist distinct pairs $(X_t,D) \neq (X_t',D')$ such that
\[
h(X_t,D) = h(X_t',D') = X_{t+1}.
\]
Then the mapping from $(X_t,D)$ to $X_{t+1}$ is not injective in $D$. Hence, for a fixed value of $X_{t+1}$, multiple values of $D$ are possible with positive probability. This implies that $D$ cannot be expressed as a measurable function of $X_{t+1}$ alone.

Because defect transitions depend on $D$, the conditional distribution
$\mathbb P(Y_{t+1} \mid Y_t, D)$ varies across different values of $D$ that map to the same $X_{t+1}$. Conditioning only on $X_{t+1}$ therefore averages over
these distinct change events:
\[
\mathbb P(Y_{t+1} \mid X_{t+1})
=
\mathbb E\!\left[
\mathbb P(Y_{t+1} \mid Y_t, D)
\,\middle|\,
X_{t+1}
\right],
\]
which is generally different from $\mathbb P(Y_{t+1} \mid Y_t, D)$ for any
specific change $D$.
\end{proof}

\subsection{System Prompt for LLM Baselines}
\label{appendix:system_prompt}

All LLM-based baselines use the system prompt shown in \autoref{fig:system_prompt} to ensure consistent behavior across model.

\begin{figure}[ht]
\centering
\begin{promptbox}[System Prompt]
You are an expert software engineer and code reviewer.
Your task is to analyze source code and code changes to determine whether a file is Defective or Benign.
Carefully reason about correctness, logic, and potential defects based only on the provided information.
Do not assume missing context or speculate beyond the given input.
Return only the final classification when asked.
\end{promptbox}
\caption{System prompt used for all LLM-based baselines.}
\label{fig:system_prompt}
\end{figure}

\subsection{Prompts Used For LLM Baselines}
\label{appendix:prompts_llm}

\begin{figure}[ht]
\centering
\begin{promptbox}[Diff-guided reasoning]
You are given SRC1 (known to be Defective) and SRC2, along with the exact differences (added/removed/changed lines). Use these to decide the status of SRC2.

[SRC1] → [Defective]
[SRC2] → [???]

[SRC1]
\textless earlier version\textgreater

[SRC2]
\textless modified version\textgreater

[Differences]
\textless added/removed lines\textgreater

Review the differences and reason carefully. Is SRC2 Defective or Benign?
\end{promptbox}
\caption{Prompt used for the Diff-guided reasoning setting.}
\label{fig:prompt_m2}
\end{figure}

\begin{figure}[ht]
\centering
\begin{promptbox}[Direct comparison]
You are given SRC1 (known to be Defective) and SRC2 code. Compare the two versions and decide if SRC2 is Defective or Benign.

[SRC1] → [Defective]
[SRC2] → [???]

[SRC1]
\textless earlier version\textgreater

[SRC2]
\textless modified version\textgreater

Think step by step and decide whether SRC2 is Defective or Benign.
\end{promptbox}
\caption{Prompt used for the Direct comparison setting.}
\label{fig:prompt_m1}
\end{figure}

\begin{figure}[ht]
\centering
\begin{promptbox}[SRC1-only change reasoning]
You are only given SRC1 (known to be Defective) and the differences/unified diff. Based on how the code has changed, predict if the new SRC2 is Defective or Benign, even though SRC2 code is hidden.

[SRC1] → [Defective]
[SRC2] → [???]

[SRC1]
\textless earlier version\textgreater

[Differences]
\textless added/removed lines\textgreater

[Unified diff]
\textless patch-style diff\textgreater

Based only on the observed changes, predict whether the unseen SRC2 would be Defective or Benign.
\end{promptbox}
\caption{Prompt used for the SRC1-only change reasoning setting.}
\label{fig:prompt_m4}
\end{figure}

\begin{figure}[ht]
\centering
\begin{promptbox}[Diffs with exemplars]
You are given the differences and unified diff. SRC1 was Defective. Judge whether these changes are likely to change SRC2's status or keep it Defective. Some example defective code lines are also provided.

[SRC1] → [Defective]
[SRC2] → [???]

[Differences]
\textless added/removed lines\textgreater

[Unified diff]
\textless patch-style diff\textgreater

[Defective Examples]
\textless example buggy lines\textgreater

Compare the diffs and examples of defective code. Does SRC2 appear Defective or Benign?
\end{promptbox}
\caption{Prompt used for the Diffs with exemplars setting.}
\label{fig:prompt_m7}
\end{figure}

\begin{figure}[ht]
\centering
\begin{promptbox}[Local-context reasoning]
You are given only the locally relevant code changes with a few lines of context around them. SRC1 is known to be Defective. Based only on these local modifications, predict whether the updated SRC2 is Defective or Benign.

[SRC1] → [Defective]
[SRC2] → [???]

[Local Context]
\textless modified lines with nearby context\textgreater

[Differences]
\textless added/removed lines\textgreater

Considering only the local code context and changes, predict if SRC2 is Defective or Benign.
\end{promptbox}
\caption{Prompt used for the Local-context reasoning setting.}
\label{fig:prompt_m6}
\end{figure}

\begin{figure}[ht]
\centering
\begin{promptbox}[Patch-aware reasoning]
You are given SRC1 (known to be Defective), SRC2, the differences, and a unified diff (like a patch). Use all this information to determine if SRC2 is Defective or Benign.

[SRC1] → [Defective]
[SRC2] → [???]

[SRC1]
\textless earlier version\textgreater

[SRC2]
\textless modified version\textgreater

[Differences]
\textless added/removed lines\textgreater

[Unified diff]
\textless patch-style diff\textgreater

Use all provided information to conclude whether SRC2 is Defective or Benign.
\end{promptbox}
\caption{Prompt used for the Patch-aware reasoning setting.}
\label{fig:prompt_m3}
\end{figure}

\begin{figure}[ht]
\centering
\begin{promptbox}[Semantic repair reasoning]
You are given the differences, unified diff, and status of SRC1 (known to be Defective). Analyze the changes and determine whether SRC2 is Defective or Benign:
- Do the modifications fix an existing defect?
- Do they introduce a new defect?
- Or leave the code unchanged?

[SRC1] → [Defective]
[SRC2] → [???]

[Differences]
\textless added/removed lines\textgreater

[Unified diff]
\textless patch-style diff\textgreater

Think step by step and decide the final status of SRC2.
\end{promptbox}
\caption{Prompt used for the Semantic repair reasoning setting.}
\label{fig:prompt_m8}
\end{figure}

\end{document}